\newcommand{\odd}{\mathrm{odd}}
\newcommand{\even}{\mathrm{even}}
\DeclareMathOperator{\wt}{wt}
\numberwithin{equation}{section}
\numberwithin{thm}{section}
\title{
    The Query Complexity of Mastermind with $\ell_p$ Distances
    \footnote{This paper is the full version of \href{http://drops.dagstuhl.de/opus/volltexte/2019/11216/}{this work} that appeared in APPROX 2019.}
}
\author{
  Manuel Fernández V \\ Carnegie Mellon University \\ \texttt{manuelf@andrew.cmu.edu} \and
  David P. Woodruff \\ Carnegie Mellon University \\ \texttt{dwoodruf@cs.cmu.edu} \and
  Taisuke Yasuda \\ Carnegie Mellon University \\ \texttt{yasuda.taisuke1@gmail.com}
}
\begin{document}

\begin{titlepage}
\maketitle
\thispagestyle{empty}
\begin{abstract}
    Consider a variant of the Mastermind game in which queries are $\ell_p$ distances, rather than the usual Hamming distance. That is, a codemaker chooses a hidden vector $\bfy\in\{-k,-k+1,\dots,k-1,k\}^n$ and answers to queries of the form $\norm*{\bfy-\bfx}_p$ where $\bfx\in\{-k,-k+1,\dots,k-1,k\}^n$. The goal is to minimize the number of queries made in order to correctly guess $\bfy$.

    Motivated by this question, in this work, we develop a nonadaptive polynomial time algorithm that works for a natural class of separable distance measures, i.e.\ coordinate-wise sums of functions of the absolute value. This in particular includes distances such as the smooth max (LogSumExp) as well as many widely-studied $M$-estimator losses, such as $\ell_p$ norms, the $\ell_1$-$\ell_2$ loss, the Huber loss, and the Fair estimator loss. When we apply this result to $\ell_p$ queries, we obtain an upper bound of $O\parens*{\min\braces*{n,\frac{n\log k}{\log n}}}$ queries for any real $1\leq p<\infty$. We also show matching lower bounds up to constant factors for the $\ell_p$ problem, even for adaptive algorithms for the approximation version of the problem, in which the problem is to output $\bfy'$ such that $\norm*{\bfy'-\bfy}_p\leq R$ for any $R\leq k^{1-\eps}n^{1/p}$ for constant $\eps>0$. Thus, essentially any approximation of this problem is as hard as finding the hidden vector exactly, up to constant factors. Finally, we show that for the noisy version of the problem, i.e.\ the setting when the codemaker answers queries with any $q = (1\pm\eps)\norm*{\bfy-\bfx}_p$, there is no query efficient algorithm.
\end{abstract}
\end{titlepage}

\newpage

\section{Introduction}
\emph{Mastermind} is a game played between two players, the \emph{codemaker} and the \emph{codebreaker}. In the 1970 original $4$-position $6$-color version of the game, the codemaker chooses $4$ colored pegs, each taking one of $6$ colors, and the codebreaker tries to guess the codemaker's $4$ pegs by making queries to the codemaker by taking a guess at the sequence of the codemaker's $4$ colored pegs. These guesses are answered by two numbers, the number of pegs guessed that are in the right position and the right color, indicated by black pegs, and the additional number of pegs of the right color but in the wrong position, indicated by white pegs.

Ever since, this game and its generalizations and variants have been studied by many computer scientists. The original version was completely characterized by \cite{knuth1977computer}, who showed upper and lower bounds of $5$ queries for deterministic strategies. The $n$-position $k$-color generalization of the game was studied in \cite{chvatal1983mastermind}, which sparked a line of research that lead to progressive improvement in upper and lower bounds for this problem, both in the original version of the game as well as in related variants of the game \cite{DBLP:journals/dm/BergerCS18}. As these variants are not the focus of this work, we refer the reader to the expositions of \cite{doerr2016playing, DBLP:journals/dm/BergerCS18} for more details on this literature.

Note that in the variant that the codebreaker only receives the black peg answers, the problem can be phrased as guessing a hidden vector based on Hamming distance queries. One can then consider many variants of the Mastermind game in which the codebreaker guesses the codemaker's hidden vector based on other distance queries. For instance, motivated by the theory of black-box complexity, \cite{afshani2019query} recently studied the variant where the distance is the length of the longest common prefix with respect to an unknown permutation. In recreational mathematics, the $\ell_1$ distance case has been studied under the name of ``digit-distance'' \cite{ginat2002digit}. When the distance between the vectors is a graph distance of a graph $G$ and we only allow for nonadaptive queries, that is when the queries cannot depend on the results of previous queries, then the query complexity is known as the \emph{metric dimension of $G$} \cite{Rodriguez-VelazquezYKO14, JiangP19}.

Another natural variant to consider is the case of $\ell_p$ distance queries. That is, the codemaker chooses a hidden vector $\bfy\in\{-k,-k+1,\dots,k-1,k\}^n$ and answers to queries of the form $\norm*{\bfy-\bfx}_p$ where $\bfx\in\{-k,-k+1,\dots,k-1,k\}^n$. This is the question we focus on in this work. We study the asymptotics with respect to $n$ and $k$, but view $p$ as a fixed constant.

\subsection{Previous work}
The above problem has been solved, even up to constant factors in the dominant term of the asymptotics for integer $p$ and $k = o(n)$. For $p=1$, note that the absolute value distance on a single coordinate is exactly the graph distance on the path graph, and the $\ell_1$ distance is exactly the graph distance on the $n$th Cartesian power of the path graph. Thus, the nonadaptive query complexity to $(2+O(\log\log n/\log n))n\log(2k+1)/\log n$ by Theorems 1 and 4 of \cite{JiangP19}. Furthermore, \cite{JiangP19} extend their techniques to a very general class of integer-valued distances in their Theorem 7, which includes $\ell_p$ distances for integer $p$. This settles the nonadaptive query complexity for $\ell_p$ distances for any fixed integer $p$ to $(2+O(\log\log n/\log n))n\log(2k+1)/\log n$ as well. Furthermore, their algorithms are efficient.

\subsection{Our contributions}
On the algorithmic side, we present Theorem \ref{thm:exact-recovery-coordinate-wise}, in which we develop a very general nonadaptive algorithm that works for any separable distance measure, i.e.\ the distance between $\bfx,\bfy\in\mathbb R^n$ is given by $f(\bfx-\bfy)$ where $f(\bfx) = \sum_{i=1}^n g_i(\abs*{x_i})$, with a mild technical assumption. This class in particular includes the smooth max ($g_i(x) = \exp(\abs{x})$) as well as many widely-studied $M$-estimator losses, such as $\ell_p$ norms for even $p\in(0,1)$, the $\ell_1$-$\ell_2$ loss ($g_i(x) = 2(\sqrt{1+\abs{x}^2/2}-1)$), the Huber loss ($g_i(x) = \abs{x}^2/2\tau$ for $\abs{x}\leq\tau$ and $g_i(x) = \abs{x}-\tau/2$ otherwise), and the Fair estimator loss ($g_i(x) = c^2(\abs{x}/c - \log(1+\abs{x}/c))$). We refer to \cite{DBLP:conf/soda/ClarksonW15} for a discussion of $M$-estimators. When we apply this to case of $g_i(x) = x^p$ \emph{for any constant real $1\leq p<\infty$}, i.e.\ when $f$ is the $\ell_p$ norm, we obtain a polynomial time algorithm making $O\parens*{\min\braces*{n,\frac{n\log k}{\log n}}}$ queries. We note that our $\ell_p$ result generalizes the result of \cite{JiangP19} both by allowing $k$ to vary and by handling noninteger $p$. For $p=\infty$, we give a simple algorithm achieving $O(n)$ queries.

We also give lower bounds for any adaptive algorithm that match our upper bounds up to constant factors, \emph{for any constant integer $1\leq p<\infty$} (Theorem \ref{thm:lb-p}) and for $p=\infty$ (Theorem \ref{thm:lb-infty}). In fact, our lower bounds are for a weaker problem, the problem of outputting an approximation $\bfy'$ such that its distance from the true hidden vector $\bfy$ is at most $\norm*{\bfy'-\bfy}_p\leq R$, whenever the approximation radius satisfies $R\leq k^{1-\eps}n^{1/p}$ (where we think of $n^{1/p} = 1$ when $p=\infty$) for constant $\eps>0$. Thus, approximation for this problem is hard, in the sense that finding the point exactly is optimal up to constant factors, even when the approximation radius is as large as $k^{1-\eps}n^{1/p}$.

Our main algorithmic technique for obtaining Theorem \ref{thm:exact-recovery-coordinate-wise} is a judicious application of a generalization of the Fourier-based detecting matrix construction of \cite{bshouty2009optimal}. Our lower bounds are simply obtained by counting the number of lattice points in an $\ell_p$ ball.

Finally, we consider a noisy version of the above problem, where the codemaker is allowed to answer queries with any answer that is within $(1\pm\eps)\norm*{\bfy-\bfx}_p$. For this variant, we show that any algorithm must take $\Omega(\exp(\eps^2 \Theta(k^p n)))$ in Theorem \ref{thm:noisy-prob}. That is, there is no query efficient algorithm for this problem.

\section{Preliminaries}
\subsection{Notation}
\begin{dfn}[$\ell_p$ norm]
    Let $1\leq p\leq\infty$. Then, we endow $\mathbb R^n$ with the $\ell_p$ norm $\norm*{\cdot}_p$, given by
    \begin{eqn}
        \norm*{\bfx}_{p}\coloneqq \parens*{\sum_{i=1}^n \abs*{x_i}^p}^{1/p}
    \end{eqn}
    if $p<\infty$ and
    \begin{eqn}
        \norm*{\bfx}_{\infty}\coloneqq \max_{i=1}^n \abs*{x_i}
    \end{eqn}
    if $p=\infty$.
\end{dfn}

\begin{dfn}[Weight of binary vector]
    Let $a\in\{0,1\}^\nu$. Then, $\wt(a)$ is the number of $1$s in $a$.
\end{dfn}

\begin{dfn}[Even-odd decomposition]\label{def:even-odd}
    Let $h:\mathbb R\to\mathbb R$ be any function. Then, the \emph{even-odd decomposition of $h$} given by
    \begin{eqndot}
        h_\even(x) &\coloneqq \frac{h(x)+h(-x)}2 \\
        h_\odd(x) &\coloneqq \frac{h(x)-h(-x)}2
    \end{eqndot}
    It is easy to see that $h = h_\even + h_\odd$ and that $h_\even(-x) = h_\even(x)$ and $h_\odd(-x) = -h_\odd(x)$ for all $x\in\mathbb R$.
\end{dfn}

\subsection{Bshouty detecting matrix}
We very briefly review the construction of the detecting matrix of \cite{bshouty2009optimal}, as we build off of this result for our algorithms.

\begin{dfn}[Detecting matrix \cite{bshouty2009optimal}]
    A $(d_1,d_2,\dots,d_n)$-detecting matrix is a $\{0,1\}$-matrix such that for every $\bfu,\bfv\in\prod_{i=1}^n\{0,1,\dots,d_i-1\}$ with $\bfu\neq \bfv$, we have $M\bfu\neq M\bfv$.
\end{dfn}

The theorem we use is the following:
\begin{thm}[Bshouty detecting matrix, Theorem 4/Corollary 5 of \cite{bshouty2009optimal}]
    Let $1<d_1\leq d_2\leq \dots\leq d_n$ where $d_1+d_2+\dots+d_n = d$. There is a $(d_1,d_2,\dots,d_n)$-detecting matrix $M$ of size $s\times n$ where
    \begin{eqn}\label{eqn:bshouty-size-bound}
        s(\log s-4)\leq 2n\log\frac{d}{n}.
    \end{eqn}
    Furthermore, for $\bfu\in\prod_{i=1}^n\{0,1,\dots,d_i-1\}$, there is a polynomial time algorithm for recovering $\bfu$ given $M\bfu$.
\end{thm}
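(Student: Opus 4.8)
This is a result of \cite{bshouty2009optimal}; one option is simply to invoke it, but here is the plan I would follow to prove it. First, note what is being asked: a $(d_1,\dots,d_n)$-detecting matrix is exactly a $\{0,1\}$ matrix $M$ for which $\bfu\mapsto M\bfu$ is injective on the box $\prod_i\{0,\dots,d_i-1\}$, equivalently $M\bfv\neq 0$ for every nonzero $\bfv\in\prod_i\{-(d_i-1),\dots,d_i-1\}$. Rearranging the claimed bound, it asserts $s=\Theta\!\left(n\log(d/n)/\log n\right)$ when $d/n$ is polynomially bounded; in the uniform case $d_i=2$ this is the classical coin-weighing problem of Erd\H{o}s--R\'enyi and Lindstr\"om, whose optimal value is $(2+o(1))n/\log_2 n$. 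So beyond the Lindstr\"om-type bound itself, the content of the theorem is that the matrix \emph{and} a decoder are explicit and polynomial-time, and that it handles non-uniform alphabets.

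The plan has two parts. The first is a reduction to near-uniform alphabets: round each $d_i$ up to the next power of two (this only affects constants), partition the coordinates into groups according to the value of $\lceil\log_2 d_i\rceil$, build a near-optimal explicit detecting matrix for each (now roughly uniform) group, and stack these as disjoint blocks of rows; since $\sum_i\log_2 d_i=\log_2\prod_i d_i\le n\log_2(d/n)$ by AM--GM, the factor $2$ in the target leaves room to absorb both the stacking and the mild sub-optimality of the per-group matrices (the few coordinates living in very small groups being negligible). The second part is the uniform construction, where I would follow Bshouty's Fourier-based approach: regard $\bfu$ as the coefficient vector of a polynomial --- equivalently a function on a cyclic group --- so that detecting it amounts to reconstructing that polynomial, and take the rows of $M$ to be suitably quantized character/evaluation sums, few enough and coarse enough that each row carries only about $\log s$ bits while the whole collection stays injective. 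The row bound then follows from a counting argument in the spirit of Erd\H{o}s--R\'enyi, the leading constant $2$ ultimately coming from the Littlewood--Offord bound on the concentration of random subset-sums (the binding case being differences $\bfv$ of large support). The structure of this matrix is also what makes recovery efficient: given $M\bfu$ one inverts the essentially Fourier/Vandermonde transform by an FFT or a structured linear solve and rounds to the nearest integer vector, which is correct because the box is its exact preimage.

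The step I expect to be the crux is the tension between the $\{0,1\}$-entry restriction and the optimal constant. If arbitrary real coefficients were allowed the problem would be almost trivial --- one would literally take a discrete Fourier transform of $\bfu$ and read it off --- so the whole difficulty is to \emph{simulate} such measurements using subset-sums alone without inflating the number of rows; a naive ``Riemann-sum'' quantization of the characters, for instance, uses far too many rows. Arranging the quantization to be economical enough that $s(\log s-4)\le 2n\log(d/n)$ holds \emph{and} the decoder still runs in polynomial time is exactly the technical heart of \cite{bshouty2009optimal}. By comparison, the non-uniform alphabets add only the bookkeeping of the grouping step.
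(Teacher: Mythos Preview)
The paper does not prove this theorem; it cites \cite{bshouty2009optimal} and only \emph{sketches} the construction and decoder, so your first sentence (``one option is simply to invoke it'') is exactly what the paper does. That said, your outlined plan diverges from Bshouty's actual construction in ways worth flagging, since the paper later relies on the specific structure of that construction (in Theorem~\ref{thm:exact-recovery-coordinate-wise}).

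Two concrete mismatches. First, the Fourier domain: Bshouty works over the Boolean hypercube $\{-1,+1\}^\nu$ with $s=2^\nu$, not a cyclic group. The rows of $M$ are indexed by $x\in\{-1,+1\}^\nu$, so $M\bfu$ is the table of values of a function $f:\{-1,+1\}^\nu\to\mathbb R$, and the columns $g_{a,i}$ are engineered to have an \emph{upper-triangular} Walsh--Fourier profile: $\hat g_{a,i}(b)=0$ for all $b>a$, and $\hat g_{a,i}(a)=d_{r+1}\cdots d_{r+i}/2^{\wt(a)}$. Decoding is then iterative: find a maximal $a$ with $\hat f(a)\neq 0$, read off $\ell_a$ coordinates of $\bfu$ from the mixed-radix expansion of $2^{\wt(a)}\hat f(a)$, subtract, and recurse. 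This is not ``invert a Vandermonde and round''; it is a peel-off procedure, and that structure is exactly what the paper exploits when it extends the decoder to non-integer $\lambda$'s.

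Second, the non-uniform alphabets: Bshouty does \emph{not} reduce to the uniform case by grouping and block-stacking. The $d_i$ enter the construction directly --- each $a\in\{0,1\}^\nu$ absorbs as many consecutive coordinates as fit multiplicatively under $2^{\wt(a)}$, and the size bound $s(\log s-4)\le 2n\log(d/n)$ comes from a global counting of how many coordinates are packed across all $a$'s. Your stacking plan risks losing the shared $\log s$ denominator: a group of size $n_j$ on its own only gives denominator $\log n_j$, and summing $\sum_j n_j\log d_j/\log n_j$ need not be $O\bigl(n\log(d/n)/\log n\bigr)$ without further argument. It may be salvageable, but it is not the route taken, and the constant $2$ would be delicate. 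Likewise, the Littlewood--Offord remark points toward an existential bound, whereas the theorem demands an explicit polynomial-time matrix and decoder.
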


We will only sketch the main idea behind the construction of the matrix and the decoding algorithm, and refer the reader to \cite{bshouty2009optimal} for the proof of the bounds and the correctness.

\paragraph{Fourier representation \normalfont\cite{bshouty2009optimal}.}
We consider the Fourier basis on real-valued functions defined on the Boolean hypercube $\{-1,+1\}^\nu$, i.e.\ the basis
\begin{eqn}
    \mathcal B\coloneqq \braces*{\chi_a(x)\coloneqq \prod_{a_i=1}x_i \Bigg\vert a\in\{0,1\}^\nu}\subseteq \braces*{f:\{-1,+1\}^\nu\to\mathbb R}.
\end{eqn}
It is known that $\mathcal B$ is an orthonormal basis, and thus any $f:\{-1,+1\}^\nu\to\mathbb R$ can be uniquely represented as
\begin{eqn}
    f(x) = \sum_{a\in\{0,1\}^s}\hat f(a)\chi_a(x)
\end{eqn}
where $\hat f(a)$ is the Fourier coefficient of $\chi_a$ given by
\begin{eqn}
    \hat f(a) = \frac1{2^\nu}\sum_{x\in\{-1,+1\}^\nu} f(x)\chi_a(x).
\end{eqn}
Using the fast Fourier transform, all the coefficients $\hat f(a)$ can be found from the values of $f(x),x\in\{-1,+1\}^\nu$ and ordered according to lexicographic order of $a\in\{0,1\}^\nu$ in time $O(\nu2^\nu)$.

\paragraph{Detecting matrix construction.} The overall idea is as follows. We choose $s$ as in equation (\ref{eqn:bshouty-size-bound}) and $\nu\coloneqq \log_2 s$. Then, we view column vectors in $\mathbb R^s$ with $s = 2^\nu$ rows as enumerations of the values of functions $f:\{-1,+1\}^\nu\to\mathbb R$. That is, for $x\in\{-1,+1\}^\nu$, the $x$th row of the column vector representing $f$ is $f(x)$. We then view our detecting matrix $M\in\{0,1\}^{s\times n}$ as a family of $n$ $\{0,1\}$-valued functions defined on $\{-1,+1\}^\nu$ and $Mu$ as a linear combination of functions from this family, where the coefficients of the linear combination are specified by the unknown vector $\bfu\in\prod_{i=1}^n\{0,1,\dots,d_i-1\}$. The $n$ functions of $M$ have a special structure in the Fourier basis, so that there is an efficient iterative algorithm for recovering the coordinates of $u$ in batches from the Fourier coefficients of the function $Mu$.

We iteratively construct columns of $M$ as follows. For each $a\in\{0,1\}^\nu$, we will choose $\ell_a$ more columns to construct, so that in the end, we have $\sum_{a\in\{0,1\}^\nu}\ell_a = n$ columns.

Suppose that columns $1$ through $r$ have already been constructed. Let $a\in\{0,1\}^\nu$ and choose an integer $\ell_a$ such that
\begin{eqndot}
    d_{r+1}d_{r+2}\dots d_{r+\ell_a} &\leq 2^{\wt(a)} \\
    d_{r+1}d_{r+2}\dots d_{r+\ell_a}d_{r+\ell_a+1} &>2^{\wt(a)-1}
\end{eqndot}
We then construct $\ell_a$ more columns of $M$ so that the $i$th new function $g_{a,i}$ has Fourier coefficient of $\chi_a$ as
\begin{eqn}
    \hat g_{a,i}(a) = d_{r+1}d_{r+2}\dots d_{r+i} / 2^{\wt(a)}
\end{eqn}
and the Fourier coefficient of $\chi_b$ for any $b>a$ (in the usual ordering on the Boolean hypercube) as
\begin{eqn}
    \hat g_{a,i}(b) = 0.
\end{eqn}
The way we choose the column functions $g_{a,i}$ to have these properties is described in \cite{bshouty2009optimal}.

\paragraph{Decoding algorithm.} We now show how to efficiently decode $M\bfu$. Essentially, we will decode $\ell_a$ of the entries of $\bfu$ at a time, subtract them off, and recurse.

Note that column vector $M\bfu$ is the enumeration of the values of a linear combination $f$ of the $g_{a,i}$ functions from above, where the row corresponding to $x\in\{-1,+1\}^\nu$ is $f(x)$. Then, using the fast Fourier transform, we find all the Fourier coefficients $\hat f(z)$ for $z\in\{0,1\}^\nu$ and search for a maximal $a\in\{0,1\}^\nu$ such that $\hat f(a)\neq 0$. For such an $a$, one can prove that its Fourier coefficient in $f$ is
\begin{eqn}
    \hat f(a) = \frac1{2^{\wt(a)}}\parens*{\lambda_{r+1} + \lambda_{r+2}d_{r+1} + \lambda_{r+3}d_{r+1}d_{r+2} + \dots + \lambda_{r+\ell_a+1}d_{r+1}d_{r+2}\dots d_{r+\ell_a}}
\end{eqn}
where $r$ is the number of columns in $M$ before the columns corresponding to $a$, and $\lambda_{r+i} = u_{r+i}$ (for sake of matching the notation in \cite{bshouty2009optimal}). Since $\lambda_{r+i}\in\{0,1,\dots,d_{r+i}-1\}$ for all $i\in[\ell_a]$, we can recover all of the $\lambda_{r+i}$. Then, these coefficients can be subtracted off and we can recurse on the remaining entries of $\bfu$.

In our Theorem \ref{thm:exact-recovery-coordinate-wise}, we will modify the above algorithm to allow for non-integer values for the $\lambda_{r+i}$, as long as they are bounded and well-separated (to be made precise later).

\section{Algorithms}

We now describe our upper bounds. As a warm up, we start with algorithms for $\ell_1$, $\ell_2$, and $\ell_\infty$. These will introduce some tricks that we exploit in our coordinate-wise sums algorithm. Then, we combine these tricks along with a modification of the Bshouty detecting matrix algorithm described above to obtain Theorem \ref{thm:exact-recovery-coordinate-wise}.

\subsection{Algorithms for \texorpdfstring{$\ell_1$}{l1}, \texorpdfstring{$\ell_2$}{l2}, and \texorpdfstring{$\ell_\infty$}{l infinity}}
Our algorithms will be based around the idea of applying the Bshouty detecting matrix $M$ to the hidden vector $\bfy$. This can be most straightforwardly applied in the case of $\ell_2$, by expanding squared distances (equation (\ref{eqn:pythagorean})).

\begin{thm}[Algorithm for $\ell_2$ queries]
Let $\bfy\in\{-k,-k+1,\dots,k-1,k\}^n$ be an unknown vector, and suppose that we receive answers to $s$ queries of the form $\norm*{\bfx-\bfy}_2$. Then, there is a polynomial time algorithm that recovers $\bfy$ in $s = O\parens*{\min\braces*{n,\tfrac{n\log k}{\log n}}}$ queries.
\end{thm}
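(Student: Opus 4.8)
The plan is to reduce recovery from $\ell_2$ queries to recovery from a single linear measurement $M\bfy$ where $M$ is a Bshouty detecting matrix, exploiting the fact that squared $\ell_2$ distances linearize nicely. Concretely, fix any base query $\bfx_0$ (say $\bfx_0 = \bfzero$, or any fixed point in the grid) and observe that for any other query point $\bfx$,
\begin{equation}
    \norm*{\bfx - \bfy}_2^2 - \norm*{\bfx_0 - \bfy}_2^2 = \norm*{\bfx}_2^2 - \norm*{\bfx_0}_2^2 - 2\inner*{\bfx - \bfx_0, \bfy}.
\end{equation}
Thus each query reveals, after subtracting known quantities, the inner product $\inner*{\bfx - \bfx_0, \bfy}$. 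So if we take the $s$ rows of a detecting matrix $M\in\{0,1\}^{s\times n}$ and, for each row $m_j$, issue the query $\bfx_j \coloneqq \bfx_0 + m_j$ (which stays in the grid $\{-k,\dots,k\}^n$ provided we start from $\bfx_0$ with all coordinates equal to $-k$, say, so that $\bfx_0 + m_j$ has coordinates in $\{-k, -k+1\}$), then the $j$th answer lets us compute $\inner*{m_j, \bfy}$, and collecting all of them gives us $M\bfy$. One extra query for $\norm*{\bfx_0 - \bfy}_2$ handles the base term.

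The next step is to decode $\bfy$ from $M\bfy$. The entries of $\bfy$ lie in $\{-k, \dots, k\}$, which after the shift $\bfu \coloneqq \bfy + k\bfone$ lie in $\{0, 1, \dots, 2k\}$, i.e.\ $\bfu \in \prod_i \{0,\dots,d_i - 1\}$ with each $d_i = 2k+1$. Since $M\bfu = M\bfy + k M\bfone$ and $M\bfone$ is known, recovering $M\bfu$ is equivalent to recovering $M\bfy$. Now invoke the Bshouty theorem with $d_1 = \dots = d_n = 2k+1$, so $d = n(2k+1)$: there is an $(d_1,\dots,d_n)$-detecting matrix $M$ of size $s\times n$ with $s(\log s - 4) \le 2n\log(2k+1)$, and a polynomial-time algorithm recovering $\bfu$ from $M\bfu$. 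Solving the size bound gives $s = O(n\log(2k+1)/\log n) = O(n\log k/\log n)$ when this is $o(n)$; and in any case we can always fall back on $s = n$ queries by simply querying $\bfx = \bfx_0 + e_i$ for each coordinate $i$ (each such query determines $y_i$ directly via the inner-product identity above), giving the $\min$ in the bound. Total query count is $s + O(1)$, and the running time is polynomial since the Fourier-transform-based decoding runs in time $O(\nu 2^\nu) = O(s\log s)$ plus the polynomial overhead from the Bshouty decoder.

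The one subtlety to check — and the only place where any care is needed — is the domain constraint: the detecting matrix is $\{0,1\}$-valued, so $\bfx_0 + m_j$ must stay inside the grid. Choosing $\bfx_0 = -k\bfone$ makes every coordinate of every query point lie in $\{-k, -k+1\} \subseteq \{-k,\dots,k\}$, so this is fine (we need $k \ge 1$, which holds since otherwise $\bfy$ is determined trivially). There is no real obstacle here; the main conceptual point is simply the identity linearizing squared distances, which is why $\ell_2$ is the easy warm-up case — it reduces cleanly to one application of the detecting matrix with no need for the non-integer/well-separated modification of the decoding algorithm that the general separable case will require.
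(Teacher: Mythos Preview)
Your proposal is correct and follows essentially the same approach as the paper: use the polarization identity to extract inner products $\langle \bfx,\bfy\rangle$ from squared $\ell_2$ distances (the paper takes $\bfx_0=\mathbf 0$ while you take $\bfx_0=-k\mathbf 1$, an inessential difference), then feed the rows of a Bshouty detecting matrix as queries and decode. You are in fact more careful than the paper about two bookkeeping details---the grid constraint on query points and the shift $\bfu=\bfy+k\mathbf 1$ into $\{0,\dots,2k\}^n$ before invoking the detecting-matrix decoder---but the core idea is identical.
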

\begin{proof}
By first making the query with the $\mathbf 0$ vector, we may find the norm $\norm*{\bfy}_2$ of the unknown vector. Now suppose we query for $\norm*{\bfx-\bfy}_2$. Note then that
\begin{eqn}\label{eqn:pythagorean}
    \angle*{\bfx,\bfy} = \frac{\norm*{\bfx}_2^2 + \norm*{\bfy}_2^2 - \norm*{\bfx-\bfy}_2^2}{2}
\end{eqn}
so we can compute the inner product between $\bfx$ and $\bfy$. Thus by taking $n$ queries to be the $n$ standard basis vectors $\bfx = \bfe_i$ for $i\in[n]$, we can always recovery $\bfy$ in $n+1$ queries. To obtain $s = O\parens*{\tfrac{n\log k}{\log n}}$ queries for $k\leq n$, we can take our query vectors $\bfx$ to be the rows of the detecting matrix of Theorem 4/Corollary 5 of \cite{bshouty2009optimal} and recover $\bfy$ by using the decoding algorithm as described in the proof. We thus conclude as desired.
\end{proof}

As shown above, if we can simulate computing inner products with binary vectors in $O(1)$ queries each, then we get an $O(n)$ algorithm by querying with the standard basis vectors or $O\parens*{\tfrac{n\log k}{\log n}}$ by using \cite{bshouty2009optimal}. For $\ell_1$, we take a similar approach. This time, the way we extract the inner product is quite different from the case of $\ell_2$. This technique turns out to be much more flexible, and will allow us to generalize the result to coordinate-wise sums.

\begin{thm}[Algorithm for $\ell_1$ queries]
Let $\bfy\in\{-k,-k+1,\dots,k-1,k\}^n$ be an unknown vector, and suppose that we receive answers to $s$ queries of the form $\norm*{\bfx-\bfy}_1$. Then, there is a polynomial time algorithm that recovers $\bfy$ in $s = O\parens*{\min\braces*{n,\tfrac{n\log k}{\log n}}}$ queries.
\end{thm}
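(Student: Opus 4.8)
The plan is to reduce the $\ell_1$ problem to the same primitive exploited in the $\ell_2$ case: simulating, with only $O(1)$ queries, the computation of $\sum_{i\in S} y_i$ for an arbitrary subset $S\subseteq[n]$, i.e.\ the inner product of the hidden $\bfy$ with an arbitrary $\{0,1\}$ vector. The observation that makes this work for $\ell_1$ is to confine all query vectors to the two \emph{extreme} symbols $\{-k,+k\}$. For such an $\bfx$, every coordinate $y_i$ lies in $[-k,k]$, so $\abs*{x_i-y_i} = k-y_i$ when $x_i = +k$ and $\abs*{x_i-y_i} = k+y_i$ when $x_i = -k$. Hence, writing $S\coloneqq\braces*{i\in[n] : x_i = +k}$,
\[
  \norm*{\bfx-\bfy}_1 = nk + \sum_{i\notin S} y_i - \sum_{i\in S} y_i = nk + \sum_{i=1}^n y_i - 2\sum_{i\in S} y_i .
\]
So one query with the all-$(-k)$ vector reveals $\norm*{(-k)\mathbf 1 - \bfy}_1 = nk + \sum_i y_i$, hence $\sum_i y_i$; after that, the query with the $\{-k,+k\}$-vector determined by any $S$ reveals $\sum_{i\in S} y_i = \tfrac12\parens*{nk + \sum_i y_i - \norm*{\bfx-\bfy}_1}$.

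Given this primitive, the argument mirrors the $\ell_2$ case. For the $O(n)$ bound we take $S = \braces*{1},\dots,\braces*{n}$, reading off each $y_i$ for $n+1$ queries total. For the $O\parens*{\tfrac{n\log k}{\log n}}$ bound, which is the smaller of the two when $k\leq n$, we shift to $\bfy'\coloneqq\bfy + k\mathbf 1\in\{0,1,\dots,2k\}^n$ and instantiate the detecting matrix $M\in\{0,1\}^{s\times n}$ of \cite{bshouty2009optimal} with $d_i = 2k+1$ for all $i$, so that $d = \sum_i d_i = n(2k+1)$ and $s(\log s - 4)\leq 2n\log(d/n) = 2n\log(2k+1)$, which gives $s = O\parens*{\tfrac{n\log k}{\log n}}$. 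Using each row of $M$, read as a subset $S$, as a query, from its answer together with the already-known $\sum_i y_i$ we recover $\sum_{i\in S} y_i$ and therefore the corresponding entry $\sum_{i\in S} y_i + k\abs*{S}$ of $M\bfy'$; we then run the polynomial-time decoding algorithm of \cite{bshouty2009optimal} on $M\bfy'$ to recover $\bfy'$, hence $\bfy$. Taking whichever scheme uses fewer queries yields $s = O\parens*{\min\braces*{n,\tfrac{n\log k}{\log n}}}$.

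The only real content is the extreme-symbol observation of the first paragraph; after that the proof reduces to the $\ell_2$-style argument and the stated guarantees of the Bshouty construction, so I expect no serious obstacle. The points that need care are bookkeeping: the detecting matrix needs each $d_i > 1$, which holds since $k\geq 1$; when $k$ exceeds $n$ we fall back to the $n+1$-query scheme so that $\tfrac{n\log k}{\log n}$ never has to beat $n$; and the shift by $k$ relating $\sum_{i\in S} y_i$ to the entries of $M\bfy'$ must be tracked throughout the decoding. I would also flag here that this linearization is exactly the flexibility to be exploited later: querying from $\{-k,+k\}^n$ turns any coordinate-wise separable distance into a coordinate-wise sum of single-variable functions of the $y_i$, which is what powers the general Theorem~\ref{thm:exact-recovery-coordinate-wise}.
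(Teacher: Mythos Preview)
Your proposal is correct and follows essentially the same approach as the paper: restrict queries to $\{-k,+k\}^n$ so that each $\abs*{x_i-y_i}$ linearizes to $k\mp y_i$, extract inner products of $\bfy$ with arbitrary $\{0,1\}$ vectors in $O(1)$ queries, and then either read off coordinates one at a time or feed the inner products into the Bshouty detecting matrix. The only cosmetic differences are that the paper phrases the query vectors via a sign vector $\bfsigma\in\{\pm1\}^n$ rather than a subset $S$, and leaves the shift $\bfy\mapsto\bfy+k\mathbf 1$ implicit by deferring to the $\ell_2$ argument; your version spells out this bookkeeping more carefully.
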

\begin{proof}
We will just show how to compute inner products in $O(1)$ queries, since the rest follows as in the $\ell_2$ case. Let $\bftau\in\{0,1\}^n$ be any binary vector and consider the sign vector $\bfsigma\in\{\pm1\}^n$ with $\sigma_i = (-1)^{\tau_i+1}$. Then for $\sigma_i\in\{\pm1\}$ and $-k\leq y_i\leq k$, we have that
\begin{eqn}
    \abs*{k\sigma_i-y_i} = \abs*{k\sigma_i-\sigma_i^2y_i} = \abs*{k-\sigma_iy_i} = k - \sigma_i y_i.
\end{eqn}
Thus,
\begin{eqn}
    \norm*{k\bfsigma-\bfy}_1 = \sum_{i=1}^n \abs*{k\sigma_i-y_i} = \sum_{i=1}^n k-\sigma_iy_i = kn - \bfsigma\cdot\bfy
\end{eqn}
so we may compute the quantity $\bfsigma\cdot\bfy = kn - \norm*{k\bfsigma-\bfy}_1$. We may then compute the desired inner product with binary vectors as $\bftau\cdot\bfy = (\bfsigma\cdot\bfy + \mathbf1_n\cdot\bfy)/2$.
\end{proof}

To conclude the section, we show an $O(n)$ algorithm for $\ell_\infty$ queries. This turns out to be optimal, as we show later.
\begin{thm}[Algorithm for $\ell_\infty$ queries]
Let $\bfy\in\{-k,-k+1,\dots,k-1,k\}^n$ be an unknown vector, and suppose that we receive answers to $s$ queries of the form $\norm*{\bfx-\bfy}_\infty$. Then, there is a polynomial time algorithm that recovers $\bfy$ in $s = O(n)$ queries.
\end{thm}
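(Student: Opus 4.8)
The plan is to recover $\bfy$ one coordinate at a time, spending exactly two queries per coordinate, for a total of $2n = O(n)$ queries; in fact the resulting procedure is nonadaptive. To learn $y_i$, I would query the two vectors $\bfx^{(i,\pm)} \coloneqq \pm k\bfe_i$ --- the vectors whose $i$-th coordinate is $\pm k$ and all of whose other coordinates equal $0$. Since $y_i \in [-k,k]$ we have $\abs*{k - y_i} = k - y_i$ and $\abs*{-k - y_i} = k + y_i$, while the remaining coordinates contribute $\abs*{y_j} \le k$; hence, writing $R_i \coloneqq \max_{j \neq i}\abs*{y_j} \le k$ for the ``noise'' from the coordinates we are not trying to learn, the two answers are $M_i^{+} = \max(k - y_i,\, R_i)$ and $M_i^{-} = \max(k + y_i,\, R_i)$.

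The key observation is that $(k - y_i) + (k + y_i) = 2k$, so $\max(k - y_i, k + y_i) = k + \abs*{y_i} \ge k \ge R_i$; therefore $\max(M_i^{+}, M_i^{-}) = k + \abs*{y_i}$, i.e.\ one of the two answers is always ``large'' and already reveals $\abs*{y_i}$. To pin down the sign I would argue that if $M_i^{-} > k$ then, since $R_i \le k$, the maximum defining $M_i^-$ cannot be attained by $R_i$, so $M_i^{-} = k + y_i$ and $y_i = M_i^{-} - k > 0$; symmetrically, $M_i^{+} > k$ gives $y_i = k - M_i^{+} < 0$; these two cases are mutually exclusive; and if neither answer exceeds $k$ then $k + \abs*{y_i} = \max(M_i^{+}, M_i^{-}) \le k$ forces $y_i = 0$. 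This three-way case analysis is exhaustive and decodes each $y_i$ exactly from $M_i^{+}, M_i^{-}$, and the whole algorithm is clearly polynomial time.

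The one subtlety --- and the reason a single query per coordinate does not suffice --- is that for any fixed setting of the $n-1$ coordinates we are not currently learning, their contribution $R_i$ to the maximum can be as large as $k$, so the $i$-th coordinate's contribution must be pushed strictly above $k$ before it can be read off the maximum. Querying both of $\pm k\bfe_i$ guarantees exactly that in one of the two directions and simultaneously fixes the sign of $y_i$, while the boundary case $y_i = 0$ is detected by both answers equaling exactly $k$. I do not anticipate any obstacle beyond handling this tie case carefully.
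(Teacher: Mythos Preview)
Your proposal is correct and follows essentially the same approach as the paper: query $\pm k\bfe_i$ for each $i$, observe that whichever of $k\pm y_i$ exceeds $k$ dominates the maximum and reveals both the sign and magnitude of $y_i$, and handle the tie case $y_i=0$ separately. Your write-up is in fact more careful than the paper's, since you explicitly track the nuisance term $R_i=\max_{j\neq i}\abs{y_j}\le k$ and verify mutual exclusivity of the cases, whereas the paper leaves these details implicit.
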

\begin{proof}
For each $i\in[n]$, we make the query $q_i^+ = \norm*{k\bfe_i-\bfy}_\infty$ and $q_i^- = \norm*{-k\bfe_i-\bfy}_\infty$. Note that $y_i = 0$ if and only if these two are both equal to $k$. If $y_i>0$, then $q_i^- = k+y_i > k$ and if $y_i<0$, then $q_i^+ = k-y_i > k$. Thus, with these two queries, we can determine $y_i$. Thus, we recover $\bfy$ in $O(n)$ queries.
\end{proof}

\subsection{Algorithm for separable distances}
In the previous section, we obtained polynomial time algorithms with tight query complexity for $\ell_1$ and $\ell_2$ by simulating inner product computations between $\bfy$ and binary vectors. We now generalize these ideas to an algorithm for any query given by sums along the coordinates. This in particular includes all $\ell_p$ norms, even for $p$ not an integer.

\begin{thm}[Algorithm for separable distances]\label{thm:exact-recovery-coordinate-wise}
Let $\bfy\in\{-k,-k+1,\dots,k-1,k\}^n$ be an unknown vector, and suppose that we receive answers to $s$ queries of the form $f(\bfy-\bfx)$, where $f(\bfx) = \sum_{i=1}^n g_i(\abs*{x_i})$. For each $i\in[n]$, define the function $h_i(x) = g_i(k-x)$ and consider the even-odd decomposition $h_i = (h_i)_\even + (h_i)_\odd$ (see Definition \ref{def:even-odd}). Also consider the following quantities:
\begin{eqn}\label{eqn:def-quantities}
    M_i^{\min} &\coloneqq \min_{x\in\{-k,-k+1,\dots,k-1,k\}}(h_i)_\odd(x) \\
    M_i^{\max} &\coloneqq \max_{x\in\{-k,-k+1,\dots,k-1,k\}}(h_i)_\odd(x) \\
    \Delta_i &\coloneqq \min_{\substack{x_1,x_2\in\{-k,-k+1,\dots,k-1,k\} \\ x_1\neq x_2}} \abs*{(h_i)_\odd(x_1) - (h_i)_\odd(x_2)}\\
    \Delta &\coloneqq \min_{i=1}^n\Delta_i \\
    d_i &\coloneqq \ceil*{\frac{M_i^{\max} - M_i^{\min}}{\Delta}}+1
\end{eqn}
If $\Delta > 0$, then there is a polynomial time algorithm that recovers $\bfy$ with $s = O\parens*{\min\braces*{n,\tfrac{\log \prod_{i=1}^n d_i}{\log n}}}$ queries.
\end{thm}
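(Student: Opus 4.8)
The plan is to reduce the problem to reconstructing a rescaled, coordinate‑wise well‑separated real vector from $\{0,1\}$‑matrix measurements, and then to decode that vector by a modification of the Bshouty algorithm. As in the $\ell_1$ case above, since $\abs*{y_i}\leq k$, querying $\bfx = k\bfsigma$ for a sign vector $\bfsigma\in\{\pm1\}^n$ returns $f(\bfy-k\bfsigma) = \sum_{i=1}^n g_i(\abs*{k\sigma_i - y_i}) = \sum_{i=1}^n g_i(k - \sigma_i y_i) = \sum_{i=1}^n h_i(\sigma_i y_i)$; writing $h_i = (h_i)_\even + (h_i)_\odd$ and using $\sigma_i = \pm 1$, this equals $C + \bfsigma\cdot\bfz$, where $z_i \coloneqq (h_i)_\odd(y_i)$ and $C \coloneqq \sum_{i=1}^n (h_i)_\even(y_i)$ is a fixed constant. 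Querying $\bfx = \pm k\mathbf 1_n$ and taking the half‑sum and half‑difference of the two answers recovers both $C$ and $\mathbf 1_n\cdot\bfz$; thereafter each binary $\bftau\in\{0,1\}^n$ costs one query, via $\bfsigma = 2\bftau - \mathbf 1_n$ and $\bftau\cdot\bfz = (\bfsigma\cdot\bfz + \mathbf 1_n\cdot\bfz)/2$. Because $\Delta>0$, each $(h_i)_\odd$ is injective on $\{-k,\dots,k\}$, so $\bfy$ is uniquely determined by $\bfz$, and it suffices to recover $\bfz$. Taking $\bftau = \bfe_i$ for $i\in[n]$ recovers all the $z_i$, hence $\bfy$, in $n+O(1)$ queries, which gives the $O(n)$ bound.

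For the other bound, rescale: set $w_i \coloneqq (z_i - M_i^{\min})/\Delta$. Then $w_i$ lies in the known finite set $W_i \coloneqq \braces*{((h_i)_\odd(x) - M_i^{\min})/\Delta : x\in\{-k,\dots,k\}} \subseteq [0, d_i - 1]$, and any two distinct elements of $W_i$ differ by at least $\Delta_i/\Delta \geq 1$; in particular $\abs*{W_i}\leq d_i$, and since $\Delta>0$ forces $M_i^{\max} - M_i^{\min}\geq\Delta$ we have $d_i\geq 2$. Sorting coordinates so that $d_1\leq\dots\leq d_n$, let $M$ be the $s\times n$ Bshouty $(d_1,\dots,d_n)$‑detecting matrix with $s$ obeying (\ref{eqn:bshouty-size-bound}). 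Using the rows of $M$ as the binary vectors $\bftau$, together with the $O(1)$ setup queries, produces $M\bfz$, hence $M\bfw$ (since $\bfw$ is a known coordinate‑wise affine image of $\bfz$), in $s+O(1)$ nonadaptive queries. To turn (\ref{eqn:bshouty-size-bound}) into the claimed form $O\parens*{\tfrac{\log \prod_{i=1}^n d_i}{\log n}}$ when the $d_i$ vary widely, one first buckets the coordinates by $\lfloor\log_2 d_i\rfloor$, applies Bshouty within each bucket, and sums the per‑bucket bounds (each bucket at level $t$ has at most $\log(\prod_i d_i)/t$ coordinates, which keeps the per‑bucket size controlled).

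It remains to recover $\bfw$ from $M\bfw$ in polynomial time, and this is where Bshouty's decoder must be modified. I would follow the decoder from the preliminaries, changing only how the mixed‑radix digits are extracted: peel them off from the most significant end and round to allowed values. At the point where the decoder has isolated $2^{\wt(a)}\hat f(a) = \lambda_{r+1} + \lambda_{r+2}d_{r+1} + \dots + \lambda_{r+\ell_a+1}d_{r+1}\cdots d_{r+\ell_a}$ with each $\lambda_j = w_j\in W_j$, the telescoping bound $0 \leq \sum_{i=1}^{\ell_a}\lambda_{r+i}d_{r+1}\cdots d_{r+i-1} \leq d_{r+1}\cdots d_{r+\ell_a} - 1$ (using $0\leq\lambda_{r+i}\leq d_{r+i}-1$) shows $V' \coloneqq 2^{\wt(a)}\hat f(a)/(d_{r+1}\cdots d_{r+\ell_a}) \in [\lambda_{r+\ell_a+1}, \lambda_{r+\ell_a+1}+1)$; since the elements of $W_{r+\ell_a+1}$ are pairwise at least $1$ apart, $\lambda_{r+\ell_a+1}$ is the \emph{unique} element of $W_{r+\ell_a+1}$ in $(V'-1, V']$ and is thereby recovered, subtracted off, and the procedure recurses on the lower digits and then on the remaining Fourier coefficients exactly as in Bshouty. (The same telescoping bound shows $\bfw\mapsto M\bfw$ is injective on $\prod_i W_i$, so the output is correct.) Inverting the rescaling recovers $\bfz$, and inverting each $(h_i)_\odd$ recovers $\bfy$.

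The main obstacle is precisely this decoding modification: one must verify that ``round to the unique nearby element of $W_j$'' correctly recovers each digit — which rests exactly on the telescoping inequality together with the $\geq 1$ separation of the $W_j$ — and that peeling off a decoded block leaves a legitimate sub‑instance for the recursion, the latter being inherited from Bshouty's construction since we alter only the arithmetic of digit recovery and not the underlying Fourier/linear‑algebraic structure. A minor, purely bookkeeping point is converting the size bound (\ref{eqn:bshouty-size-bound}), which is phrased in terms of $\sum_i d_i$, into the stated bound involving $\prod_i d_i$ (and taking the minimum with the $O(n)$ algorithm).
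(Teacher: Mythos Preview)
Your proposal is correct and follows essentially the same approach as the paper: compute inner products of binary vectors with $\bfh_\odd(\bfy)$ via sign-vector queries $\bfx=k\bfsigma$, rescale to a well-separated real vector in $\prod_i[0,d_i-1]$, apply the Bshouty matrix, and modify the decoder to handle non-integer digits. The only differences are cosmetic: the paper decodes each Fourier coefficient by binary search over the lex-ordered product set $\mathcal X$ (using that the mixed-radix map $\psi$ is monotone in lex order, which rests on the same telescoping inequality you use), rather than your top-down peel-and-round; and the paper does not spell out the bucketing you mention for converting (\ref{eqn:bshouty-size-bound}) to the $\log\prod_i d_i$ form, though in all of its applications the $g_i$ coincide and hence the $d_i$ are equal, so the point is moot there.
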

\begin{proof}
Let $\bfh_\even$ and $\bfh_\odd$ be the functions that apply $(h_i)_\even$ and $(h_i)_\odd$ on the $i$th coordinate, respectively. We will show that we can recover $\bfh_\odd(\bfy)$ in $O\parens*{\min\braces*{n,\tfrac{\log \prod_{i=1}^n d_i}{\log n}}}$ queries. Note that since $\min_{i=1}^n\Delta_i > 0$, $(h_i)_\odd$ is injective for each $i$ and thus we can recover $\bfy$ from $\bfh_\odd(\bfy)$ in polynomial time using a lookup table for the values of $(h_i)_\odd$.

\paragraph{Inner products with binary vectors.}
We first show that we can compute the inner product between $\bfh_\odd(\bfy)$ and any binary vector $\bftau\in\{0,1\}^n$. To do this, consider the sign vector $\bfsigma\in\{\pm1\}^n$ with $\sigma_i = (-1)^{\tau_i+1}$. Note that for $\sigma_i\in\{\pm1\}$ and $-k\leq y_i\leq k$, we have $\abs*{k\sigma_i-y_i} = \abs*{k-\sigma_iy_i} = k - \sigma_iy_i$. Then, by querying vectors of the form $\bfx = k\bfsigma$, we obtain
\begin{eqn}
    f(k\bfsigma-\bfy) = \sum_{i=1}^n g_i(k-\sigma_iy_i) = \sum_{i=1}^n h_i(\sigma_iy_i).
\end{eqn}
Then using the even/oddness of $(h_i)_\even$/$(h_i)_\odd$, we have
\begin{eqn}
    \sum_{i=1}^n h_i(\sigma_iy_i) = \parens*{\sum_{i=1}^n (h_i)_\even(y_i)} + \parens*{\sum_{i=1}^n\sigma_i(h_i)_\odd(y_i)} = \mathbf{1}_n\cdot \bfh_\even(\bfy) + \bfsigma\cdot\bfh_\odd(\bfy).
\end{eqn}
Note also that by querying for $k\mathbf{1}_n$ and $-k\mathbf{1}_n$, we also obtain
\begin{eqndot}
    \frac{f(k\mathbf{1}_n-\bfy) + f(-k\mathbf{1}_n-\bfy)}2 &= \sum_{i=1}^n (h_i)_\even(y_i) = \mathbf{1}_n\cdot \bfh_\even(\bfy) \\
    \frac{f(k\mathbf{1}_n-\bfy) - f(-k\mathbf{1}_n-\bfy)}2 &= \sum_{i=1}^n (h_i)_\odd(y_i) = \mathbf{1}_n\cdot \bfh_\odd(\bfy)
\end{eqndot}
Using these, we may compute $\bftau\cdot\bfh_\odd(\bfy) = \tfrac12(\bfsigma + \mathbf{1}_n)\cdot \bfh_\odd(\bfy)$ and thus we are able to compute dot products of arbitrary binary vectors with $\bfh_\odd(\bfy)$. At this point, we can obtain $O(n)$ queries just by taking the binary vectors to be the standard basis vectors, so we focus on obtaining an algorithm making at most $O\parens*{\tfrac{\log \prod_{i=1}^n d_i}{\log n}}$ queries.

\paragraph{Modification of the Bshouty detecting matrix decoding \normalfont\cite{bshouty2009optimal}.}
Recall the detecting matrix of \cite{bshouty2009optimal} for integer vectors in $\prod_{i=1}^n\{0,1,\dots,d_i-1\}$ for $d_i\in\mathbb N$ for $i\in[n]$. If $\bfh_\odd(\bfy)$ took integer values, then we could just directly use this theorem to conclude with the desired query complexity. However, this is not true of $\bfh_\odd(\bfy)$, and so we need to show how to modify the \cite{bshouty2009optimal} construction to handle our setting.

We first shift and scale our vector $\bfh_\odd(\bfy)$. Let $\bfM^{\min}$ be the vector with $M_i^{\min}$ in the $i$th coordinate. Note that we can easily compute $\bftau\cdot\bfM^{\min}$. Thus, we are able to compute dot products of arbitrary binary vectors with the vector  $(\bfh_\odd(\bfy)-\bfM^{\min})$. By dividing by $\Delta$, we have dot products of arbitrary binary vectors with$\tfrac1\Delta\parens*{\bfh_\odd(\bfy)-\bfM^{\min}}$. We now define this as
\begin{eqn}
    \varphi_i(y) &\coloneqq \frac1\Delta\parens*{(h_i)_\odd(y)-M_i^{\min}} \\
    \bfphi(\bfy) &\coloneqq \frac1\Delta\parens*{\bfh_\odd(\bfy)-\bfM^{\min}}
\end{eqn}
Note then that $0\leq \varphi_i \leq d_i-1$ (see equation (\ref{eqn:def-quantities})) and that $y_1\neq y_2\implies \abs*{\varphi(y_1)-\varphi(y_2)}\geq 1$.

Now consider the detecting matrix construction of Theorem 4 in \cite{bshouty2009optimal}. Recall that we may extract the Fourier coefficient of $\chi_a$ for some maximal $a$ in our unknown vector $\bfphi(\bfy)$ viewed as a function, which gives us
\begin{eqn}\label{eqn:fourier-coefficient}
    \lambda_{r+1}+\lambda_{r+2}d_{r+1}+\lambda_{r+3}d_{r+1}d_{r+2} + \dots + \lambda_{r+\ell_a+1}d_{r+1}d_{r+2}\dots d_{r+\ell_a}
\end{eqn}
which in our case we set $\lambda_j = \varphi_j(y_j)$. Now let $\mathcal X\coloneqq\prod_{j=r+1}^{r+\ell_a+1}\varphi_j(\{-k,-k+1,\dots,k-1,k\})$ be the image of our original points in a subset of $\ell_a+1$ coordinates starting at $r+1$ under the corresponding $\varphi_j$. Consider the function $\psi:\mathcal X\to \mathbb R^+$ defined via
\begin{eqn}
    \psi(\bfz) = \sum_{i=0}^{\ell_a} z_{i+1}\prod_{j=1}^{i}d_{r+j}.
\end{eqn}
It is easy to see that when we endow $\mathcal X$ with the lexicographical ordering, then $\psi$ is increasing. Thus, given the Fourier coefficient as in equation (\ref{eqn:fourier-coefficient}), we can do binary search on the at most $k^n$ values in $\mathcal X$ to extract the values $\lambda_{r+i}$ in time $O(n\log k)$. Given this step of recovering $\ell_a$ of the coordinates, we can proceed as in the rest of \cite{bshouty2009optimal} by subtracting these coordinates of the unknown vector and recursing. Hence, we conclude that we may recover $\bfphi(\bfy)$ efficiently and thus $\bfh(\bfy)$, as claimed.
\end{proof}

\subsubsection{Applications}
As a corollary of the above result, we obtain an algorithm for recovering $\bfy\in\{-k,-k+1,\dots,k-1,k\}^n$ from $s = O\parens*{\min\braces*{n,\tfrac{n\log k}{\log n}}}$ distance queries in $\ell_p$ and a variety of $M$-estimators. We also give a weaker bound of $O(\min\{n,nk/\log n\})$ for the smooth max. The proofs are deferred to Appendix \ref{section:applications}.

\begin{cor}[Algorithm for $\ell_p$ queries]\label{cor:lp}
Let $\bfy\in\{-k,-k+1,\dots,k-1,k\}^n$ be an unknown vector, and suppose that we receive answers to $s$ queries of the form $\norm*{\bfx-\bfy}_p$ for $p$ a constant. Then, there is a polynomial time algorithm that recovers $\bfy$ in $s = O\parens*{\min\braces*{n,\tfrac{n\log k}{\log n}}}$ queries.
\end{cor}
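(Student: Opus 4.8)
The plan is to apply Theorem \ref{thm:exact-recovery-coordinate-wise} with $g_i(x) = x^p$ for every $i \in [n]$, and to verify that the separation quantity $\Delta$ is positive and that $\prod_{i=1}^n d_i$ is at most $(2k+1)^{O(n)}$, so that the query bound $O\parens*{\min\braces*{n, \tfrac{\log\prod_i d_i}{\log n}}}$ collapses to $O\parens*{\min\braces*{n, \tfrac{n\log k}{\log n}}}$. Concretely, with $g_i(x) = x^p$ we have $h_i(x) = (k - x)^p$ on the relevant domain, and its odd part is $(h_i)_\odd(x) = \tfrac12\parens*{(k-x)^p - (k+x)^p}$. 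I would first record that this function is strictly decreasing in $x$ on $[-k, k]$ (each of $(k-x)^p$ and $-(k+x)^p$ is strictly decreasing for $p \geq 1$), hence injective on the integer grid $\{-k, \dots, k\}$, which is exactly the hypothesis $\Delta_i > 0$ we need.

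The main quantitative step is to lower bound $\Delta_i$ and upper bound $M_i^{\max} - M_i^{\min}$ so as to control $d_i = \ceil*{(M_i^{\max}-M_i^{\min})/\Delta} + 1$. For the range, $M_i^{\max} - M_i^{\min} = (h_i)_\odd(-k) - (h_i)_\odd(k) = \tfrac12\parens*{(2k)^p - 0} - \tfrac12\parens*{0 - (2k)^p} = (2k)^p$. For the minimum gap $\Delta_i$, since $(h_i)_\odd$ is differentiable with derivative $-\tfrac{p}{2}\parens*{(k-x)^{p-1} + (k+x)^{p-1}}$, the gap between consecutive integer values is at least the minimum of $\abs*{(h_i)_\odd'}$ over $[-k,k]$ times $1$; that minimum is attained where $(k-x)^{p-1} + (k+x)^{p-1}$ is smallest. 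For $p \geq 1$ this sum is minimized at... well, for $p \geq 2$ the function $t \mapsto t^{p-1}$ is convex so the sum is minimized at $x = 0$, giving $\Delta_i \geq \tfrac{p}{2}\cdot 2 k^{p-1} = p k^{p-1}$; for $1 \leq p < 2$ the sum $(k-x)^{p-1}+(k+x)^{p-1}$ is instead minimized at the endpoints $x = \pm k$, where one term vanishes and the other is $(2k)^{p-1}$, so $\Delta_i \geq \tfrac{p}{2}(2k)^{p-1}$. In either case $\Delta_i = \Omega_p(k^{p-1})$, with the constant depending only on the fixed $p$, so $\Delta = \Omega_p(k^{p-1})$ as well. (One minor care point: the mean value theorem bound gives the gap between consecutive grid points, but $\Delta_i$ is the min over all pairs $x_1 \neq x_2$; since $(h_i)_\odd$ is monotone, the minimum over all pairs equals the minimum over consecutive pairs, so this is fine.)

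Combining, $d_i = \ceil*{(2k)^p / \Omega_p(k^{p-1})} + 1 = O_p(k) + 1 = O_p(k)$, uniformly in $i$, hence $\prod_{i=1}^n d_i = O_p(k)^n = 2^{O_p(n\log k)}$ — here I should be slightly careful when $k$ is a small constant, in which case $d_i = O(1)$ and the product is $2^{O(n)}$, which is still absorbed since $\log k = \Theta(1)$ there, or more simply one always has $\log\prod_i d_i = O(n\log(k+2)) = O(n\log k)$ for $k \geq 2$ and the $k=1$ or $k=0$ cases are trivial / handled by the $n$ term. Therefore $\log\prod_i d_i / \log n = O(n\log k/\log n)$, and Theorem \ref{thm:exact-recovery-coordinate-wise} yields a polynomial time algorithm making $s = O\parens*{\min\braces*{n, \tfrac{n\log k}{\log n}}}$ queries, as claimed. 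The only real obstacle is the case analysis on $p$ for the lower bound on $\Delta_i$ — it is genuinely necessary because the minimizer of the derivative magnitude moves from the interior ($p \geq 2$) to the boundary ($p < 2$) — but in both regimes the bound $\Delta_i = \Omega_p(k^{p-1})$ goes through, so this is a routine if slightly fiddly computation rather than a conceptual difficulty.
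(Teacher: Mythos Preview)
Your proposal is correct and follows essentially the same route as the paper: apply Theorem \ref{thm:exact-recovery-coordinate-wise} with $g_i(x)=x^p$, compute $(h_i)_\odd(x)=\tfrac12((k-x)^p-(k+x)^p)$, bound the range by $(2k)^p$, and lower bound the minimum slope by a case split on $p\ge 2$ versus $1<p<2$ to get $\Delta=\Omega_p(k^{p-1})$ and hence $d_i=O_p(k)$. The only cosmetic difference is that the paper argues the case split via the sign of $h_\odd''$ while you phrase it via convexity/concavity of $t\mapsto t^{p-1}$, and the paper dispatches $p\in\{1,2\}$ to the earlier dedicated theorems rather than folding them into the general bound; your extra remarks on monotonicity (so consecutive gaps suffice) and the small-$k$ edge cases are sound and do not deviate from the paper's argument.
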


\begin{cor}[Algorithm for $M$-estimator loss queries]\label{cor:m}
Let $\bfy\in\{-k,-k+1,\dots,k-1,k\}^n$ be an unknown vector, and suppose that we receive answers to $s$ queries of the form $\norm*{\bfx-\bfy}_M = \sum_{i=1}^n M(\abs{x_i-y_i})$ for any one of the following choices of $M$:
\begin{itemize}
    \item \textbf{$\ell_1$-$\ell_2$ loss}
    \begin{eqn}
        M(x) = 2\parens*{\sqrt{1+\frac{\abs{x}^2}{2}}-1}
    \end{eqn}
    \item \textbf{Huber loss}
    \begin{eqn}
        M(x) = \begin{cases}
            \abs{x}^2/2\tau & \text{if $\abs{x}\leq\tau$} \\
            \abs{x}-\tau/2 & \text{otherwise}
        \end{cases}
    \end{eqn}
    \item \textbf{Fair estimator loss}
    \begin{eqn}
        M(x) = c^2\parens*{\frac{\abs{x}}{c}-\log\parens*{1+\frac{\abs{x}}{c}}}
    \end{eqn}
\end{itemize}
Then, there is a polynomial time algorithm that recovers $\bfy$ in $s = O\parens*{\min\braces*{n,\tfrac{n\log k}{\log n}}}$ queries.
\end{cor}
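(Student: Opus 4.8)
The plan is to invoke Theorem~\ref{thm:exact-recovery-coordinate-wise} with $g_i = M$ for every $i\in[n]$, so that proving Corollary~\ref{cor:m} reduces to two things: checking the hypothesis $\Delta>0$, and showing $d_i = O(k)$ for each $i$ (which yields $\log\prod_{i=1}^n d_i = O(n\log k)$, hence $\log\prod_i d_i/\log n = O(n\log k/\log n)$, matching the claimed bound). Since all the $g_i$ coincide and each of the three losses is already a function of $\abs{x}$, it suffices to analyze a single $M$. Here $h_i(x) = M(k-x)$, and when we restrict to $x\in\{-k,\dots,k\}$ --- so that $k-x$ and $k+x$ both lie in $[0,2k]$ --- the even-odd decomposition gives $(h_i)_\odd(x) = \tfrac12\parens*{M(k-x)-M(k+x)}$.

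First I would record three elementary facts valid for each of the three choices of $M$: (i) $M$ is strictly increasing on $[0,\infty)$; (ii) $M$ is convex on $[0,\infty)$; and (iii) $M(0)=0$ and $M(2k)=O(k)$, where the hidden constant may depend on the fixed parameter $\tau$ or $c$. All three are immediate from the formulas. For the $\ell_1$-$\ell_2$ loss, $M'(x) = x/\sqrt{1+x^2/2}>0$ and $M''(x) = (1+x^2/2)^{-3/2}>0$, and $M(2k)\le 2+2\sqrt2\,k$. For the Huber loss, $M'$ equals $x/\tau$ on $[0,\tau)$ and $1$ on $[\tau,\infty)$ --- continuous and nondecreasing, so $M$ is convex and increasing --- and $M(2k)\le 2k$. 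For the Fair loss, $M'(x) = cx/(c+x)>0$ and $M''(x) = c^2/(c+x)^2>0$, and $M(2k)\le 2ck$.

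Next I would dispatch the two required claims using these facts. From (i), the step $(h_i)_\odd(x+1)-(h_i)_\odd(x) = -\tfrac12\parens*{[M(k-x)-M(k-x-1)]+[M(k+x+1)-M(k+x)]}$ is strictly negative, so $(h_i)_\odd$ is strictly decreasing on $\{-k,\dots,k\}$, hence injective --- exactly what Theorem~\ref{thm:exact-recovery-coordinate-wise} needs to recover $\bfy$ from $\bfh_\odd(\bfy)$. Strict monotonicity also forces $\Delta_i$ to be attained on consecutive arguments; writing $a=k-x\in\{1,\dots,2k\}$, that consecutive gap equals $\tfrac12\parens*{[M(a)-M(a-1)]+[M(2k-a+1)-M(2k-a)]}$. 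By convexity (ii), $M(t)-M(t-1)\ge M(1)-M(0)=M(1)$ for every integer $t\ge 1$; applying this to both summands (whose arguments lie in $\{0,1,\dots,2k\}$) shows every consecutive gap is $\ge M(1)$, so $\Delta=\min_i\Delta_i\ge M(1)>0$, a positive \emph{constant}. Finally, since $(h_i)_\odd$ is decreasing, $M_i^{\max}-M_i^{\min} = (h_i)_\odd(-k)-(h_i)_\odd(k) = M(2k)$ (using $M(0)=0$), which is $O(k)$ by (iii); hence $d_i = \ceil*{M(2k)/\Delta}+1 = O(k)$, completing the reduction, and Theorem~\ref{thm:exact-recovery-coordinate-wise} gives the algorithm with $s = O\parens*{\min\braces*{n,\tfrac{n\log k}{\log n}}}$ queries.

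I do not expect a genuine obstacle: the corollary is essentially the observation that these standard robust $M$-estimators are convex, vanish at $0$, and grow at most linearly, which is precisely the regime where Theorem~\ref{thm:exact-recovery-coordinate-wise} is efficient. The only points needing care are treating $\tau$ and $c$ as fixed constants (so $M(1)$ and the linear-growth constant are $\Theta(1)$) and the small-$k$ regime $2k<\tau$ of the Huber loss, where $M$ is still in its quadratic branch --- but there $M(2k) = 2k^2/\tau = O(k)$ and $M(1) = 1/(2\tau)$ remains a positive constant, so the same bounds hold. Alternatively one could bypass convexity and bound the smallest consecutive gap of each $M$ on $[0,2k]$ directly, but invoking convexity keeps the argument uniform across the three estimators.
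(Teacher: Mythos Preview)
Your proposal is correct and follows the same high-level plan as the paper: apply Theorem~\ref{thm:exact-recovery-coordinate-wise} with $g_i=M$, and verify $\Delta=\Omega(1)$ and $M_i^{\max}-M_i^{\min}=O(k)$ so that $d_i=O(k)$.

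Where you differ is in the execution. The paper treats each of the three losses separately: for every $M$ it writes out $h_\odd$ explicitly, differentiates to show $h_\odd'<0$ on $[-k,k]$, and then either checks the second derivative or works piecewise to locate where $\abs{h_\odd'}$ is minimized, reading off $\Delta=\Omega(1)$ and the range $O(k)$ case by case. You instead extract three structural properties common to all three estimators --- $M$ strictly increasing, convex, and at most linear with $M(0)=0$ --- and run a single argument: monotonicity of $h_\odd$ from (i), the uniform lower bound $\Delta\ge M(1)$ from convexity (ii), and the range bound $M(2k)=O(k)$ from (iii). Your route is shorter and would apply verbatim to any convex $M$-estimator with linear growth, whereas the paper's computations, while routine, must be redone for each new loss. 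Both arguments rely on treating $\tau$ and $c$ as fixed constants, which you flag explicitly; the paper handles the small-$k$ Huber case by noting it reduces to $\ell_2$, while you handle it directly within the same bounds.
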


\begin{cor}[Algorithm for smooth max queries]\label{cor:sm}
Let $\bfy\in\{-k,-k+1,\dots,k-1,k\}^n$ be an unknown vector, and suppose that we receive answers to $s$ smooth max queries. Then, there is a polynomial time algorithm that recovers $\bfy$ in $s = O\parens*{\min\braces*{n,\tfrac{nk}{\log n}}}$ queries.
\end{cor}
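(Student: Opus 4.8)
The plan is to apply Theorem~\ref{thm:exact-recovery-coordinate-wise} with $g_i(x) = \exp(x)$ for every $i\in[n]$, so that $f(\bfx) = \sum_{i=1}^n \exp(\abs*{x_i})$. A smooth max query on $\bfy-\bfx$ returns the LogSumExp value $\log\sum_{i=1}^n\exp(\abs*{y_i-x_i})$ (or, depending on convention, $\sum_{i=1}^n\exp(\abs*{y_i-x_i})$ directly); in either case, after exponentiating the answer if necessary, we may assume we receive $f(\bfy-\bfx)$, which is exactly the separable form demanded by the theorem. It then remains to evaluate the quantities of equation~(\ref{eqn:def-quantities}) for this $f$, verify $\Delta>0$, and bound $\prod_{i=1}^n d_i = \exp(\Theta(nk))$.

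First I would compute $h_i(x) = g_i(k-x) = e^{k-x}$, whose even-odd decomposition is $(h_i)_\even(x) = e^k\cosh x$ and $(h_i)_\odd(x) = -e^k\sinh x$. Since $\sinh$ is strictly increasing, $(h_i)_\odd$ is injective on $\{-k,\dots,k\}$, so $\Delta_i>0$ and the theorem applies. The extrema of $(h_i)_\odd$ over $\{-k,\dots,k\}$ are attained at $x=\pm k$, so $M_i^{\max} = e^k\sinh k$ and $M_i^{\min} = -e^k\sinh k$, giving $M_i^{\max}-M_i^{\min} = 2e^k\sinh k$. For $\Delta$, the key observation is that $\sinh$ is odd and strictly convex on $[0,\infty)$, so the consecutive integer gaps $\sinh(m+1)-\sinh(m)$ are nondecreasing in $m\ge 0$ and, by oddness, symmetric about $0$; hence the smallest gap between consecutive integers in $\{-k,\dots,k\}$ is $\sinh 1 - \sinh 0 = \sinh 1 > 1$, so $\Delta_i = e^k\sinh 1$ and $\Delta = e^k\sinh 1$. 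Consequently $d_i = \ceil*{2\sinh(k)/\sinh(1)}+1$, and using the crude bound $\sinh k \le e^k/2$ this is at most $Ce^k$ for an absolute constant $C$.

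Finally, $\log\prod_{i=1}^n d_i = \sum_{i=1}^n \log d_i \le n\log(Ce^k) = O(nk)$, so Theorem~\ref{thm:exact-recovery-coordinate-wise} yields a polynomial time algorithm making $s = O\parens*{\min\braces*{n, \tfrac{nk}{\log n}}}$ queries, as claimed. The only step that is not a mechanical substitution is determining $\Delta$: one has to rule out a pair of consecutive integers whose $\sinh$-images lie closer than $\sinh 1$, which is what the convexity-and-oddness argument achieves. This step is robust, however: even the trivial lower bound $\Delta_i \ge e^k$, coming from $\sinh'(x)=\cosh x\ge 1$, already forces $d_i = O(e^k)$ and hence the same asymptotic query bound.
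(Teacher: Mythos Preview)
Your proposal is correct and follows essentially the same route as the paper: apply Theorem~\ref{thm:exact-recovery-coordinate-wise} with $g_i(x)=e^x$, compute $(h_i)_\odd(x)=-e^k\sinh x$, and bound $d_i=\exp(O(k))$ so that $\log\prod_i d_i=O(nk)$. Your analysis of $\Delta$ is in fact sharper than the paper's (you obtain $\Delta=e^k\sinh 1$ via the convexity argument, whereas the paper only records $\Delta=\Omega(1)$), but both lead to the same asymptotic query bound.
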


\section{Lower Bounds}
In this section, we compliment our $\ell_p$ algorithms with matching lower bounds, for integer $p$. Our lower bounds work even for the problem of approximating the hidden vector and for adaptive randomized algorithms with constant success probability, improving upon the nonadaptive lower and exact lower bound of \cite{JiangP19}.

\begin{thm}[Lower bound for integer $\ell_p$]\label{thm:lb-p}
Let $1\leq p<\infty$ be a constant integer and let $R\in (0,kn^{1/p}]$ be an approximation radius. Suppose there exists an algorithm $\mathcal A$ such that for all unknown vectors $\bfy\in\{-k,-k+1,\dots,k-1,k\}^n$, $\mathcal A$ outputs a vector $\bfy'\in\{-k,-k+1,\dots,k-1,k\}^n$ such that
\begin{eqn}
    \norm*{\bfy'-\bfy}_p\leq R
\end{eqn}
in $s$ possibly adaptive $\ell_p$ queries with probability at least $2/3$ over the algorithm's random coin tosses. Then
\begin{eqn}
    s = \Omega\parens*{\frac{n\log(kn^{1/p}/R)}{\log k + \log n}}.
\end{eqn}
In particular, if $R\leq k^{1-\eps}n^{1/p}$ for some constant $\eps>0$, then
\begin{eqn}
    s = \Omega\parens*{\frac{n\log k}{\log k + \log n}},
\end{eqn}
which is $\Omega\parens*{\frac{n\log k}{\log n}}$ if $k < n$ and $\Omega(n)$ if $k\geq n$.
\end{thm}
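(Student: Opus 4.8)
The plan is to use a standard information-theoretic / counting argument: each query returns a single real number, but the algorithm's behavior is determined (for a fixed coin sequence) by the sequence of answers it sees, so the number of distinct output vectors the algorithm can produce is bounded by the number of distinct answer-transcripts, which is in turn bounded by the number of distinct values an $\ell_p$ query can take. Since the algorithm must succeed with probability $\geq 2/3$, by averaging there is a fixed coin sequence that is correct on at least a $2/3$ fraction of all $(2k+1)^n$ hidden vectors; fix that deterministic algorithm. For this deterministic algorithm, the map $\bfy \mapsto (\text{transcript of answers})$ determines the output $\bfy'$, so the number of distinct $\bfy'$ produced across all inputs is at most the number of distinct transcripts.

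First I would bound the number of distinct answers to a single $\ell_p$ query. For $\bfx,\bfy\in\{-k,\dots,k\}^n$ and integer $p$, the quantity $\norm*{\bfx-\bfy}_p^p = \sum_i \abs{x_i-y_i}^p$ is a nonnegative integer that is at most $(2k)^p n$, so there are at most $(2k)^p n + 1 = O((kn)^{O(1)})$ possible answers per query (here I use that $p$ is a constant integer). Hence across $s$ adaptive queries there are at most $((2k)^p n+1)^s$ distinct transcripts, so the deterministic algorithm produces at most $N \coloneqq ((2k)^p n+1)^s$ distinct output vectors $\bfy'$.

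Next I would lower bound how many outputs are needed. Each output $\bfy'$ can only be "correct" (in the $\leq R$ sense) for hidden vectors $\bfy$ lying in the $\ell_p$-ball of radius $R$ around $\bfy'$ intersected with the grid $\{-k,\dots,k\}^n$; call the maximum such count $V(R)$. Since the algorithm is correct on $\geq \frac23 (2k+1)^n$ inputs, we need $N\cdot V(R) \geq \frac23(2k+1)^n$. The remaining task is to upper bound $V(R)$, the number of grid points in an $\ell_p$ ball of radius $R$; the main obstacle is getting a clean, tight bound here. I would argue as follows: a grid point at $\ell_p$ distance $\leq R$ from a center differs from the center in some set $S$ of coordinates, and on those coordinates the per-coordinate displacements $\abs{y_i' - y_i}\geq 1$ are positive integers whose $p$-th powers sum to at most $R^p$; counting the number of such integer vectors of varying support is dominated by a bound of the form $V(R) \leq \binom{n + R^p}{R^p} 2^{R^p}$ or, more simply, one can bound the number of lattice points in the $\ell_p$ ball of radius $R$ by a volume-type estimate $V(R) = O\bigl((CR^p/ n \cdot \text{stuff})\bigr)$; in any case the bound one needs is $\log V(R) = O(R^p \log(n/R^p+2))$, equivalently $\log V(R) \leq O\bigl(n^{1/p}\cdot\text{adjustments}\bigr)$ — the point being that $\log V(R)$ is roughly $R^p\log n$ up to constants when $R^p \ll n$.

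Finally I would combine the two bounds. Taking logarithms of $N\cdot V(R)\geq \frac23(2k+1)^n$ gives
\[
    s\cdot \log((2k)^p n + 1) + \log V(R) \geq n\log(2k+1) - O(1),
\]
and substituting $\log V(R) = O(R^p \log n)$ and $\log((2k)^pn+1) = O(\log k + \log n)$ yields
\[
    s = \Omega\!\left(\frac{n\log k - O(R^p\log n)}{\log k + \log n}\right).
\]
To finish I would massage this into the claimed form $\Omega\bigl(\tfrac{n\log(kn^{1/p}/R)}{\log k+\log n}\bigr)$: writing $R^p\log n$ in terms of $\log(kn^{1/p}/R)$ and checking that when $R \leq k^{1-\eps} n^{1/p}$ we have $\log(kn^{1/p}/R) \geq \eps\log k$, so the numerator is $\Omega(n\log k)$, giving $s = \Omega\bigl(\tfrac{n\log k}{\log k+\log n}\bigr)$, which is $\Omega(n\log k/\log n)$ when $k<n$ and $\Omega(n)$ when $k\geq n$. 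I expect the lattice-point count $V(R)$ to be the delicate step — one wants an upper bound clean enough that the final arithmetic produces exactly $\log(kn^{1/p}/R)$ in the numerator rather than something lossy — and it is worth isolating that estimate as its own lemma.
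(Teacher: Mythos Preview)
Your overall architecture is the same as the paper's: reduce to a deterministic algorithm via Yao/averaging, bound the number of transcripts by $((2k)^p n+1)^s$ since $\norm{\bfx-\bfy}_p^p$ is an integer in $\{0,\dots,(2k)^p n\}$, and compare against the number of hidden vectors that a single output can cover. The gap is entirely in your estimate of $V(R)$, the number of lattice points in an $\ell_p$ ball of radius $R$.

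Your bound $\log V(R)=O(R^p\log n)$ (or $O(R^p\log(n/R^p+2))$) is only sensible when $R^p\ll n$, and it is far too weak for the range the theorem covers. The statement allows $R$ as large as $k^{1-\eps}n^{1/p}$, so $R^p$ can be of order $k^{(1-\eps)p}n$. Plugging that into your final inequality gives a numerator $n\log k - O(k^{(1-\eps)p} n\log n)$, which is hugely negative; the bound collapses. Moreover, even if both regimes were handled, $n\log k - O(R^p\log n)$ cannot be ``massaged'' into $n\log(kn^{1/p}/R)$: these are genuinely different functions of $R$.

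The fix the paper uses, and which you should adopt, is to bound $V(R)$ by the \emph{volume} of the $\ell_p$ ball rather than by a combinatorial count of sparse lattice vectors. The volume of the radius-$R$ $\ell_p$ ball in $\mathbb R^n$ is
\[
V \;=\; R^n\,2^n\,\frac{\Gamma(1+1/p)^n}{\Gamma(1+n/p)} \;=\; \parens*{\Theta\!\parens*{\tfrac{R}{n^{1/p}}}}^{n},
\]
so $\log V = n\log\Theta(R/n^{1/p})$. (The number of integer points in the ball is at most the volume of the ball of radius $R+O(n^{1/p})$, which is the same expression up to constants.) Then the numerator becomes
\[
n\log(2k+1) - \log V \;=\; n\log\Theta\!\parens*{\tfrac{k\,n^{1/p}}{R}},
\]
which is exactly the claimed $n\log(kn^{1/p}/R)$. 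With this single substitution your proof goes through; the rest of your outline is correct.
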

\begin{proof}
By Yao's minimax principle \cite{yao1977probabilistic}, it suffices to show the lower bound for all deterministic algorithms $\mathcal A$ that correctly approximates a uniformly random $\bfy\in\{-k,-k+1,\dots,k-1,k\}^n$ with probability at least $2/3$.

Note that each query $\norm*{\bfx-\bfy}_p^p$ results in a nonnegative integer that is at most $(2k)^pn$. Thus, there are at most $((2k)^pn + 1)^s$ possible sequences of answers. Now let $Q$ be the set of all sequence of answers that $\mathcal A$ can observe, and for each sequence of answers $\bfq\in Q$, let $S_\bfq$ denote the set of vectors $\bfy\in\{-k,-k+1,\dots,k-1,k\}^n$ such that the deterministic algorithm $\mathcal A$ observes $\bfq$ on input $\bfy$. Then, $S_\bfq$ partitions the unknown vectors $\bfy$ into $\abs*{Q}$ disjoint sets. Then, the probability that $\abs*{S_\bfq}$ has size at most $\tfrac1{100}\tfrac{(2k+1)^n}{\abs*{Q}}$ is
\begin{eqn}
    \Pr_\bfy\parens*{\abs*{S_\bfq}\leq\frac1{100}\frac{(2k+1)^n}{\abs*{Q}}} &= \sum_{\substack{\bfq\in Q \\ \abs*{S_\bfq}\leq \frac1{100}\frac{(2k+1)^n}{\abs*{Q}}}}\Pr_\bfy\parens*{\text{$\mathcal A$ queries the sequence $\bfq$}} \\
    &= \sum_{\substack{\bfq\in Q \\ \abs*{S_\bfq}\leq \frac1{100}\frac{(2k+1)^n}{\abs*{Q}}}}\frac{\abs*{S_\bfq}}{(2k+1)^n} \\
    &\leq \sum_{\substack{\bfq\in Q \\ \abs*{S_\bfq}\leq \frac1{100}\frac{(2k+1)^n}{\abs*{Q}}}}\frac1{100}\frac{(2k+1)^n}{\abs*{Q}}\frac1{(2k+1)^n} \\
    &\leq \sum_{\bfq\in Q}\frac1{100\abs*{Q}} = \frac1{100}.
\end{eqn}
Thus with probability at least $99/100$, $\abs*{S_\bfq}$ has size at least $\tfrac1{100}\tfrac{(2k+1)^n}{\abs*{Q}}$.

Note that by \cite{wang2005volumes}, the volume of a unit $\ell_p$ ball is $2^n\Gamma(1+1/p)^n/\Gamma(1+n/p)$, so the volume of a ball of radius $R$ in $\ell_p$ is
\begin{eqn}
    V \coloneqq R^n 2^n\frac{\Gamma(1+1/p)^n}{\Gamma(1+n/p)}= \parens*{\Theta\parens*{\frac{R}{n^{1/p}}}}^n.
\end{eqn}
Now suppose that $\bfq$ is a sequence of queries such that $\abs*{S_\bfq}>2V$ and let $\bfz$ be the output of the deterministic algorithm $\mathcal A$ on the sequence of queries $\bfq$. Then, at most $V$ of the points in $S$ can be in the $\ell_p$ ball of radius $R$ centered at $\bfz$. Thus, with probability at least $1/2$ over the random hidden vector $\bfy$, we output a point $\bfz$ such that $\norm*{\bfz-\bfy}_p\geq R$. Thus, if
\begin{eqn}
    \frac1{100}\frac{(2k+1)^n}{\abs*{Q}} > 2V,
\end{eqn}
then our probability of success is at most $1/2 + 1/100$ and thus we do not have a correct algorithm. Thus, it must be that
\begin{eqn}
    \frac1{100}\frac{(2k+1)^n}{\abs*{Q}}\leq 2V\implies \frac{(2k+1)^n}{200V}\leq \abs*{Q}\leq ((2k)^pn+1)^s.
\end{eqn}
Rearranging, we have that
\begin{eqn}
    s\geq\frac{\log\frac{(2k+1)^n}{200V}}{\log((2k)^pn + 1)} = \Omega\parens*{\frac{n\log(kn^{1/p}/R)}{\log k + \log n}},
\end{eqn}
as claimed.
\end{proof}

For $p = \infty$, we have a lower bound of $\Omega(n)$ regardless of $k$.
\begin{thm}[Lower bound for $\ell_\infty$]\label{thm:lb-infty}
Let $R\in (0,k]$ be an approximation radius. Suppose there exists an algorithm $\mathcal A$ such that for all unknown vectors $\bfy\in\{-k,-k+1,\dots,k-1,k\}^n$, $\mathcal A$ outputs a vector $\bfy'\in\{-k,-k+1,\dots,k-1,k\}^n$ such that
\begin{eqn}
    \norm*{\bfy'-\bfy}_\infty\leq R
\end{eqn}
in $s$ possibly adaptive $\ell_\infty$ queries with probability at least $2/3$ over the algorithm's random coin tosses. Then
\begin{eqn}
    s = \Omega\parens*{\frac{n\log(k/R)}{\log k}}.
\end{eqn}
In particular, if $R\leq k^{1-\eps}$ for a constant $\eps>0$, then $s = \Omega(n)$.
\end{thm}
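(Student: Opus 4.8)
The plan is to mirror the counting argument in the proof of Theorem~\ref{thm:lb-p}, exploiting the fact that the $\ell_\infty$ ball is a cube and that each $\ell_\infty$ query has only few possible answers. By Yao's minimax principle \cite{yao1977probabilistic} it suffices to lower bound $s$ for every deterministic algorithm $\mathcal A$ that correctly approximates a uniformly random $\bfy\in\{-k,-k+1,\dots,k-1,k\}^n$ with probability at least $2/3$. The first point is that each answer $\norm*{\bfx-\bfy}_\infty$ is a nonnegative integer at most $2k$, so there are at most $2k+1$ possible answers per query and hence at most $\abs*{Q}\le (2k+1)^s$ distinct answer sequences $\bfq$ that $\mathcal A$ can observe; this replaces the $((2k)^p n+1)^s$ bound in Theorem~\ref{thm:lb-p} and is exactly why the resulting denominator will be $\log(2k+1)$ rather than $\log k+\log n$.

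Next I would run the same averaging step essentially verbatim: partitioning the $(2k+1)^n$ hidden vectors into the fibers $S_\bfq\coloneqq\{\bfy:\mathcal A\text{ observes }\bfq\text{ on input }\bfy\}$, a uniformly random $\bfy$ lands with probability at least $99/100$ in a fiber with $\abs*{S_\bfq}\ge\tfrac1{100}\tfrac{(2k+1)^n}{\abs*{Q}}$. The ``volume'' step now becomes trivial and needs nothing from \cite{wang2005volumes}: for any center $\bfz\in\{-k,\dots,k\}^n$, the number of grid points $\bfy'$ with $\norm*{\bfy'-\bfz}_\infty\le R$ is at most $V\coloneqq(2\lfloor R\rfloor+1)^n\le(2R+1)^n$, since each coordinate is confined to at most $2\lfloor R\rfloor+1$ integers. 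If some observed fiber has $\abs*{S_\bfq}>2V$, then the fixed output $\bfz$ of $\mathcal A$ on $\bfq$ correctly approximates at most $V$ of its points, so the conditional success probability is at most $1/2$; combined with the $99/100$ event this gives overall success at most $\tfrac1{100}+\tfrac12<\tfrac23$ unless $\tfrac1{100}\tfrac{(2k+1)^n}{\abs*{Q}}\le 2V$, i.e.\ $\abs*{Q}\ge\tfrac{(2k+1)^n}{200(2R+1)^n}$.

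Chaining this with $\abs*{Q}\le(2k+1)^s$ and taking logarithms in base $2k+1$ yields
\[
    s \ge \frac{n\log\frac{2k+1}{2R+1}-\log 200}{\log(2k+1)} = \Omega\!\parens*{\frac{n\log(k/R)}{\log k}},
\]
using $\tfrac{2k+1}{2R+1}=\Theta(k/R)$ and $\log(2k+1)=\Theta(\log k)$; the ``in particular'' statement then follows since $R\le k^{1-\eps}$ forces $\log(k/R)\ge\eps\log k$, so $s=\Omega(\eps n)=\Omega(n)$. There is essentially no hard step here — the argument is a direct transcription of that of Theorem~\ref{thm:lb-p} with the cube in place of the $\ell_p$ ball — so the only things to be careful about are the routine edge cases: when $R<1$ exact recovery is forced and one should take $V=1$ (which only strengthens the bound), and boundary effects of the grid only decrease the lattice-point count $V$ and hence never hurt us.
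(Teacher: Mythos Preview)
Your proof is correct and follows essentially the same approach as the paper: Yao's minimax, the partition/averaging step from Theorem~\ref{thm:lb-p}, the answer-count bound $\abs*{Q}\le(2k+1)^s$, and comparison with the size of the $\ell_\infty$ ball. The only cosmetic difference is that you count lattice points in the ball as $(2R+1)^n$ whereas the paper uses the continuous volume $(2R)^n$; both give the same asymptotic conclusion.
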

\begin{proof}
By the same argument as the finite $\ell_p$ case, we use Yao's minimax principle to pass the argument to a lower bound for all deterministic algorithms $\mathcal A$ on uniformly random inputs $\bfy$ succeeding with probability at least $2/3$. Furthermore, by the same partition argument as before, we have that $\abs*{S_\bfq}$ is at least $\tfrac1{100}\tfrac{(2k+1)^n}{\abs*{Q}}$ with probability at least $99/100$.

The volume of an $\ell_\infty$ ball of radius $R$ is $(2R)^n$, so as before, we must have
\begin{eqn}
    \frac1{100}\frac{(2k+1)^n}{\abs*{Q}}\leq 2(2R)^n.
\end{eqn}
When $p=\infty$, there are only $(2k+1)^s$ possible sequences of answers, so we instead have the bound
\begin{eqn}
    \frac{(2k+1)^n}{(2R)^n}\leq 200(2k+1)^s
\end{eqn}
By rearranging, we obtain the bound $s = \Omega\parens*{\frac{n\log(k/R)}{\log k}}$ as desired.
\end{proof}

\subsection{Lower bound for the noisy problem}
Finally, we show that in the noisy version of the problem, i.e.\ the setting where the codemaker is allowed to answer the queries $\bfx$ with any $q = (1\pm\eps)\norm*{\bfy-\bfx}_p$, there is no good algorithm.
\begin{thm}[Lower bound for the noisy problem]\label{thm:noisy-prob}
Let $1\leq p<\infty$ be constant and let $0<R<kn^{1/p}$ be an approximation radius. Suppose there exists an algorithm $\mathcal A$ such that for all unknown vectors $\bfy\in\{-k,-k+1,\dots,k-1,k\}^n$, $\mathcal A$ outputs a vector $\bfy'\in\{-k,-k+1,\dots,k-1,k\}^n$ such that
\begin{eqn}
    \norm*{\bfy'-\bfy}_p\leq R
\end{eqn}
in $s$ possibly adaptive $(1\pm\eps)$-noisy $\ell_p$ queries, i.e.\ answers with adversarially chosen $q_\bfx = (1\pm\eps)\norm*{\bfy-\bfx}_p$, with probability at least $2/3$ over the algorithm's random coin tosses. Then
\begin{eqn}
    s = \Omega\parens*{\exp\parens*{\eps^2\Theta(k^p n)}}.
\end{eqn}
\end{thm}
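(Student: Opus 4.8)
The plan is to use Yao's minimax principle together with an oblivious adversary that answers every query with (a $p$-th root of) its \emph{expected} $p$-th power distance, so that the codebreaker effectively learns nothing. By \cite{yao1977probabilistic} it suffices to fix a deterministic algorithm $\mathcal A$, draw $\bfy$ uniformly from $\{-k,\dots,k\}^n$, and exhibit an adversary against which $\mathcal A$ errs with probability more than $1/3$. For a query $\bfx$ set $\mu_\bfx\coloneqq \mathbb E_\bfy\norm*{\bfx-\bfy}_p^p=\sum_{i=1}^n \mathbb E_{y_i}\abs*{x_i-y_i}^p$; a short calculation (each summand lies between its value at $x_i=0$ and its value at $x_i=\pm k$) shows $\mu_\bfx=\Theta(k^pn)$ uniformly over all $\bfx\in\{-k,\dots,k\}^n$. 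The adversary's rule is: on query $\bfx$, answer $q_\bfx\coloneqq\mu_\bfx^{1/p}$ if this is a legal noisy answer for $\bfy$, i.e.\ if $\norm*{\bfx-\bfy}_p\in[\mu_\bfx^{1/p}/(1+\eps),\,\mu_\bfx^{1/p}/(1-\eps)]$, and otherwise answer any legal value.

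The key point is that the ``mean answer'' $\mu_\bfx^{1/p}$ does not depend on $\bfy$. Let $E$ be the event that the adversary gives the mean answer to \emph{every} query. On $E$, since $\mathcal A$ is deterministic, the entire transcript $\bfx_1,q_{\bfx_1},\dots,\bfx_s,q_{\bfx_s}$ is a single fixed sequence and $\mathcal A$'s output $\bfy'$ is a single fixed vector, independent of $\bfy$. Hence $\Pr[\mathcal A\text{ succeeds}]\le\Pr[E^c]+\Pr_\bfy[\norm*{\bfy'-\bfy}_p\le R]$, and I would bound the two terms separately.

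For the first term, fix any of the (at most $s$) queries $\bfx$ occurring in the fixed transcript. Then $\norm*{\bfx-\bfy}_p^p=\sum_{i=1}^n\abs*{x_i-y_i}^p$ is a sum of $n$ independent variables each supported on $[0,(2k)^p]$ and summing to mean $\mu_\bfx=\Theta(k^pn)$; a Chernoff bound shows it leaves a $(1\pm\Theta(\eps))$-multiplicative window around $\mu_\bfx$ — the only way $\mu_\bfx^{1/p}$ can fail to be a legal answer for $\bfy$ — with probability at most $\exp(-\Omega(\eps^2 k^pn))$. A union bound over the queries gives $\Pr[E^c]\le s\exp(-\Omega(\eps^2 k^pn))$. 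For the second term, $\bfy'$ is fixed, so $\Pr_\bfy[\norm*{\bfy'-\bfy}_p\le R]$ is the number of points of $\{-k,\dots,k\}^n$ in the $\ell_p$ ball of radius $R$ about $\bfy'$ divided by $(2k+1)^n$; bounding the former by $\mathrm{vol}(B_p(R+n^{1/p}))=(\Theta(R/n^{1/p}))^n$ via the volume formula of \cite{wang2005volumes} used in Theorem~\ref{thm:lb-p}, the probability is $(\Theta(R/(kn^{1/p})))^n\le\tfrac1{10}$ once $R$ is a suitable ($p$-dependent) constant factor below $kn^{1/p}$ — and if $R$ is within that factor of $kn^{1/p}$ the ball $B_p(R)$ already contains all of $\{-k,\dots,k\}^n$, so the recovery problem is vacuous. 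Combining, $\Pr[\mathcal A\text{ succeeds}]\le s\exp(-\Omega(\eps^2 k^pn))+\tfrac1{10}$, so success probability at least $2/3$ forces $s\exp(-\Omega(\eps^2 k^pn))\ge\tfrac12$, i.e.\ $s=\Omega(\exp(\eps^2\Theta(k^pn)))$.

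The step I expect to be the main obstacle is the concentration bound on $\Pr[E^c]$: one must verify that the mean answer is legal for all but an exponentially small fraction of hidden vectors \emph{uniformly over all possible queries} $\bfx\in\{-k,\dots,k\}^n$, since any of them might appear in the transcript. This requires a uniform two-sided estimate of $\mu_\bfx$ and uniform tail control of $\sum_i\abs*{x_i-y_i}^p$, handling degenerate queries such as $\bfx$ in a corner of the cube (where the coordinate distributions are strongly skewed) or the unique query equal to $\bfy$ (which trivially gives $\norm*{\bfx-\bfy}_p=0$ and has mass $(2k+1)^{-n}$). Everything else — Yao's principle, the fixed-transcript observation, and the lattice-point count, which is essentially already in the proof of Theorem~\ref{thm:lb-p} — is routine.
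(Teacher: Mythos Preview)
Your proposal is correct and follows essentially the same approach as the paper: Yao's principle, an oblivious adversary that returns (the $p$-th root of) the expected $p$-th power distance, a Chernoff bound plus union bound to show this is legal for all $s$ queries on the canonical transcript except with probability $s\exp(-\Omega(\eps^2 k^p n))$, and the conclusion that a fixed output cannot approximate a uniformly random $\bfy$. You are in fact more explicit than the paper on two points: you take the $p$-th root so that the answer is literally an $\ell_p$ distance, and you spell out the lattice-point bound showing $\Pr_\bfy[\norm*{\bfy'-\bfy}_p\le R]$ is small for a fixed $\bfy'$, which the paper leaves implicit in the phrase ``the algorithm cannot be correct.'' Your worry in the last paragraph about needing uniformity over \emph{all} $\bfx$ is slightly overstated: since on $E$ the transcript is a single fixed sequence, you only need the Chernoff bound at those $s$ specific queries, and the uniform estimate $\mu_\bfx=\Theta(k^pn)$ is all that is required there.
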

\begin{proof}
By Yao's minimax principle, we can take the algorithm to be deterministic by taking our hidden vector $\bfy$ to be drawn uniformly from $\{-k,-k+1,\dots,k-1,k\}^n$. Now fix any query $\bfx\in\{-k,-k+1,\dots,k-1,k\}^n$ and let $\mu = \bfE_\bfz\parens*{\norm*{\bfx-\bfz}_p^p} = \Theta(k^pn)$. Then by Chernoff bounds,
\begin{eqn}
    \Pr_{\bfy}\parens*{\abs*{\norm*{\bfx-\bfy}_p^p-\mu}\geq \eps\mu}\leq 2\exp\parens*{-\eps^2\mu}.
\end{eqn}
Thus, if the number of queries $s$ is less than $\exp\parens*{\eps^2\Theta(k^p n)}/200$, then by the union bound over the $s$ queries, with probability at least $99/100$ over the choice of $\bfy$, the codemaker can just return $\bfE_\bfz\parens*{\norm*{\bfx-\bfz}_p^p}$ for any query $\bfx$. Thus, the deterministic codebreaker algorithm sees the same sequence of answers with probability at least $99/100$ and so the algorithm cannot be correct. Hence, we conclude that $s = \Omega(\exp\parens*{\eps^2\Theta(k^p n)})$.
\end{proof}

\section{Acknowledgements}
We thank Flavio Chierichetti and Ravi Kumar for helpful discussions, as well
as the anonymous reviewers for helpful feedback. We also thank Zilin Jiang and Nikita Polianskii for pointing out connections to their work in \cite{JiangP19}.

\bibliographystyle{alpha}
\bibliography{citations}

\appendix

\section{Applications of Theorem \ref{thm:exact-recovery-coordinate-wise}}\label{section:applications}
In this section, we include computations that we need to apply our separable distances algorithm in Theorem \ref{thm:exact-recovery-coordinate-wise} to $\ell_p$ queries, $M$-estimator loss queries, and smooth max queries.

\paragraph{$\ell_p$ queries.}

\begin{proof}[Proof of Corollary \ref{cor:lp}]
We are in the setting to use Theorem \ref{thm:exact-recovery-coordinate-wise}, with $g_i(x) = x^p$ for all $i\in[n]$ and $h_\odd(x) = \tfrac12((k-x)^p - (k+x)^p)$. Recall that we have efficient algorithms with the desired guarantees when $p\in\{1,2\}$ so we dismiss these cases. In the remaining range of $p$, we just need to compute $\prod_{i=1}^n d_i$.

Note that
\begin{eqn}
    h_\odd'(x) = -\frac{p}2\parens*{(k-x)^{p-1} + (k+x)^{p-1}} < 0
\end{eqn}
on $x\in [-k,k]$ so $h_\odd$ is decreasing on this interval. Then, $(2k)^p/2 = h_\odd(-k)\geq h_\odd(x)\geq h_\odd(k) = -(2k)^p/2$. Furthermore, note that
\begin{eqn}
    h_\odd''(x) =  \frac{p(p-1)}2\parens*{(k-x)^{p-2}-(k+x)^{p-2}}.
\end{eqn}
If $p > 2$, then this is negative on $x > 0$, so $\abs*{h_\odd'(x)}$ is smallest at $x = 0$ and thus $\abs*{h_\odd'(x)}\geq \abs*{h_\odd'(0)} = pk^{p-1}$ for all $x$. If $1 < p < 2$, then this is positive on $x\geq 0$, so $\abs*{h_\odd'(x)}$ is smallest at $x = k$ and thus $\abs*{h_\odd'(x)}\geq \abs*{h_\odd'(k)} = (p/2)(2k)^{p-1}$ for all $x$. In either case, we have that $\Delta = \Omega(pk^{p-1})$ and the range is $M^{\max}-M^{\min} = O(k^p)$ and thus $d_i =  O((M^{\max}-M^{\min})/\Delta) = O(k)$. Thus, the query complexity is
\begin{eqn}
    O\parens*{\min\braces*{n, \frac{\log\prod_{i=1}^n d_i}{\log n}}} = O\parens*{\min\braces*{n, \frac{n\log k}{\log n}}}
\end{eqn}
as desired.
\end{proof}

\paragraph{$M$-estimator loss queries.}

\begin{proof}[Proof of Corollary \ref{cor:m}]
For the $\ell_1$-$\ell_2$ loss, we have that
\begin{eqn}
    h_\odd(x) = \sqrt{1+\frac{\abs{k-x}^2}{2}} - \sqrt{1+\frac{\abs{k+x}^2}{2}}
\end{eqn}
so
\begin{eqn}
    h_\odd'(x) = \frac12\parens*{\frac{x-k}{\sqrt{1+\abs{k-x}^2/2}} - \frac{x+k}{\sqrt{1+\abs{k+x}^2/2}}} < 0
\end{eqn}
on $x\in[-k,k]$ so $h_\odd$ is decreasing on this interval. Then, $\sqrt{1+2k^2} = h_\odd(-k)\geq h_\odd(k) = -\sqrt{1+2k^2}$. Furthermore, one can check that the second derivative is positive on $x>0$, so $\abs{h_\odd'(x)}$ is smallest at $x=k$ and thus $\abs{h_\odd'(x)}\geq \abs{h_\odd'(k)} = k/\sqrt{1+2k^2}$ for all $x$. Thus, $\Delta = \Omega(1)$ and the range is $M^{\max}-M^{\min} = O(k)$ and thus $d_i = O(k)$.

For the Huber loss, we assume that $k>\tau$, since otherwise the Huber loss is just the $\ell_2$ loss. We also restrict our attention to $x\in[0,k]$ by symmetry. Then, we have that
\begin{eqn}
    h_\odd(x) =
    \begin{cases}
        \frac12\parens*{\abs{k-x}-\abs{k+x}} = -x & \text{if $x\in[0,k-\tau]$} \\
        \frac12\parens*{\frac{\abs{k-x}^2}{2\tau}-\abs{k+x}+\frac\tau2} = \frac12\parens*{\frac{\abs{k-x}^2}{2\tau}-(k+x)+\frac\tau2} & \text{if $x\in(k-\tau,k]$}
    \end{cases}
\end{eqn}
so
\begin{eqn}
    h_\odd'(x) =
    \begin{cases}
        -1 & \text{if $x\in[0,k-\tau]$} \\
        \frac{x-(k+\tau)}{2\tau} & \text{if $x\in(k-\tau,k]$}
    \end{cases} < 0.
\end{eqn}
The range is thus $\abs{h_\odd(k)} = 2k-\tau/2 = O(k)$ and $\Delta = \Omega(1)$ so $d_i = O(k)$.

For the Fair estimator loss, restricting our attention again to $x\in[0,k]$, we have that
\begin{eqn}
    h_\odd(x) = \frac12c^2\parens*{\frac{\abs{k-x}}{c}-\frac{\abs{k+x}}{c} - \log(1+\abs{k-x}/c)) + \log(1+\abs{k+x}/c))} = -cx + \frac{c^2}2\log\frac{c+k+x}{c+k-x}
\end{eqn}
so
\begin{eqn}
    h_\odd'(x) = -c+\frac{c^2}2\frac{c+k-x}{c+k+x}\frac{2(c+k)}{(c+k-x)^2} = -c + \frac{c^2(c+k)}{(c+k)^2-x^2} \leq -c + \frac{c(c+k)}{c+2k} < 0.
\end{eqn}
The range is thus $\abs{h_\odd(k)} = O(k)$ and $\Delta = \Omega(1)$ so $d_i = O(k)$.
\end{proof}

\paragraph{Smooth max queries.}

\begin{proof}[Proof of Corollary \ref{cor:sm}]
For the smooth max, restricting our attention to $x\in[0,k]$, we have that
\begin{eqn}
    h_\odd(x) = \frac12\parens*{\exp(\abs{k-x}) + \exp(\abs{k+x})} = e^k\frac{e^x-e^{-x}}2
\end{eqn}
which has range $O(e^{2k})$ and $\Delta = \Omega(1)$.
\end{proof}

\end{document}